\documentclass[pdflatex,sn-mathphys-num]{sn-jnl}

\usepackage[T1]{fontenc}
\usepackage[utf8]{inputenc}
\usepackage{graphicx}
\usepackage{amsmath,amssymb,amsfonts}
\usepackage{amsthm}
\usepackage{booktabs}
\usepackage{float}

\theoremstyle{thmstyleone}
\newtheorem{theorem}{Theorem}
\newtheorem{proposition}{Proposition}
\newtheorem{corollary}{Corollary}

\theoremstyle{thmstyletwo}

\newtheorem{remark}{Remark}

\theoremstyle{thmstylethree}

\begin{document}

\title[]{A Conjugate Bayesian Framework for Fast 3D Positronium Lifetime Estimation with a Partial System Matrix}

\author[1]{\fnm{Berkin} \sur{Uluutku}}
\author[1]{\fnm{Giulianno} \sur{Gasparato}}
\author[2]{\fnm{Manish} \sur{Das}}
\author[3]{\fnm{Jaros{\l}aw} \sur{Choi{\'n}ski}}
\author[2]{\fnm{Anand} \sur{Pandey}}
\author[2]{\fnm{Sushil} \sur{Sharma}}
\author[2]{\fnm{Pawe{\l}} \sur{Moskal}}
\author[2]{\fnm{Ewa} \sur{St{\k e}pie{\'n}}}
\author[4]{\fnm{Chien-Min} \sur{Kao}}
\author*[1]{\fnm{Hsin-Hsiung} \sur{Huang}}\email{hsin.huang@ucf.edu}

\affil[1]{\orgdiv{School of Data, Mathematical, and Statistical Sciences}, \orgname{University of Central Florida}, \orgaddress{\city{Orlando}, \state{FL}, \country{USA}}}
\affil[2]{\orgdiv{Faculty of Physics, Astronomy and Applied Computer Science and Center for Theranostics}, \orgname{Jagiellonian University}, \orgaddress{\city{Krakow}, \country{Poland}}}
\affil[3]{\orgdiv{Heavy Ion Laboratory}, \orgname{University of Warsaw}, \orgaddress{\city{Warsaw}, \country{Poland}}}
\affil[4]{\orgdiv{Department of Radiology}, \orgname{The University of Chicago}, \orgaddress{\city{Chicago}, \state{IL}, \country{USA}}}

\abstract{
\textbf{Background:} Positronium lifetime imaging extends conventional positron emission tomography by using the time interval between positron emission and annihilation as an additional contrast mechanism. Voxel-wise lifetime estimation in fully three-dimensional settings is computationally difficult because the number of feasible detector-time channels grows rapidly, whereas only a small subset is observed in practice. We developed a scalable statistical framework for three-dimensional positronium lifetime estimation based on a time-of-flight-aware partial system matrix restricted to observed detector-time channels, combined with posterior event-to-voxel weighting and a conjugate Gamma--Exponential update for closed-form voxel-wise effective-rate estimation.

\textbf{Results:} Restricting the forward model to observed detector-time channels reduced memory and computational requirements while preserving the Poisson data model for retained detected triple coincidences. In simulated data with 4056 voxels, the analytic Bayesian estimator required 2.76~s versus 74.46~s for 10 L-BFGS-B iterations on the same CPU while accurately recovering the effective-rate map. In a triple-coincidence dataset acquired with a J-PET prototype scanner and a NEMA image-quality phantom, a 234\,375-voxel effective-rate map was estimated in approximately 3~s from about \(3.64\times10^5\) retained events.

\textbf{Conclusions:} Restricting the system matrix to observed detector-time channels makes fully three-dimensional positronium lifetime estimation computationally practical for sparse triple-coincidence data. The proposed posterior-weighted conjugate update provides a fast and stable single-component surrogate estimator of voxel-wise effective lifetime for large-scale three-dimensional positronium lifetime imaging.
}

\keywords{Bayesian inference, positronium lifetime imaging, time-of-flight PET}

\maketitle

\section{Background}\label{sec:intro}

Positronium lifetime imaging (PLI) extends conventional positron emission tomography (PET) by exploiting the time interval between positron emission and annihilation as an additional source of image contrast \citep{moskal_positronium_2021}. Because positronium properties depend on the surrounding microenvironment, the measured lifetime carries information that complements conventional activity imaging \citep{moskal_positronium_2019-1}. Over the past several years, experimental demonstrations of positronium imaging have progressed rapidly, including dedicated detector systems \citep{moskal_positronium_2021,huang_high-resolution_2025,samanta_feasibility_2023,takyu_positronium_2024}, high-resolution phantom studies \citep{huang_high-resolution_2025,das_first_2025, das_first_2026,kubat_first_2026}, and recent in vivo measurements on clinical and prototype scanners \citep{moskal_positronium_2024,mercolli_vivo_2026}. These developments motivate reconstruction methods capable of estimating voxel-wise lifetime parameters in fully three-dimensional settings. Reconstruction methods for positronium imaging now include direct statistical formulations for time-of-flight (TOF) PET \cite{qi_positronium_2022}, surrogate likelihood methods such as SPLIT \cite{huang_split_2024}, and fast high-resolution reconstruction strategies such as SIMPLE \cite{huang_fast_2025}. At the same time, recent experimental studies on commercial scanners have emphasized that realistic lifetime analysis is influenced by multi-component decay behavior, detector timing response, and random triple coincidences \cite{steinberger_positronium_2024,mercolli_first_2025}.

The main challenge addressed in this work is efficient three-dimensional estimation when only a small fraction of feasible detector–time channels is actually observed in the data. We consider a two-stage estimation strategy. First, we estimate a detected activity image representing the expected number of retained detected triple coincidences contributed by each voxel. Second, conditional on this activity estimate, we estimate a voxel-wise effective annihilation-rate map using a single-component exponential model. The resulting rate should be interpreted as a compact representation of the observed timing signal rather than a complete multi-component model of positronium decay.

The paper makes three contributions. First, we formulate a TOF-aware partial system matrix defined only on observed detector–time channels, substantially reducing memory and computational cost while preserving the Poisson data model. Second, we introduce an event-to-voxel weighting formulation that interprets the unknown source voxel of each event as a latent variable, enabling principled fractional attribution of events across voxels. Third, we introduce a conjugate Bayesian estimator that yields closed-form voxel-wise annihilation-rate estimates, enabling fast three-dimensional estimation while providing a compact mean representation of the observed positronium lifetime signal.

Finally, the proposed framework is evaluated using both simulated data and an experimental triple-coincidence dataset acquired with a J-PET prototype scanner and a NEMA Image-Quality phantom \citep{das_first_2025}. The results demonstrate that the partial system matrix formulation enables computationally tractable three-dimensional reconstruction while the conjugate Bayesian estimator provides stable lifetime estimates and voxel-wise uncertainty information in realistic experimental conditions.

\section{Methods}\label{sec:meth}

This section presents the 3D estimation framework and simulation setup. To make volumetric estimation computationally feasible, we replace the full system matrix with a partial system matrix defined only on observed detector--time channels. We then estimate voxel-wise annihilation rates using both maximum likelihood and a conjugate Bayesian update, the latter providing fast inference and voxel-wise uncertainty estimates. Figure~\ref{fig:flow} summarizes the proposed workflow. 

\begin{figure}[h]
\centering
\includegraphics[width=0.9\textwidth]{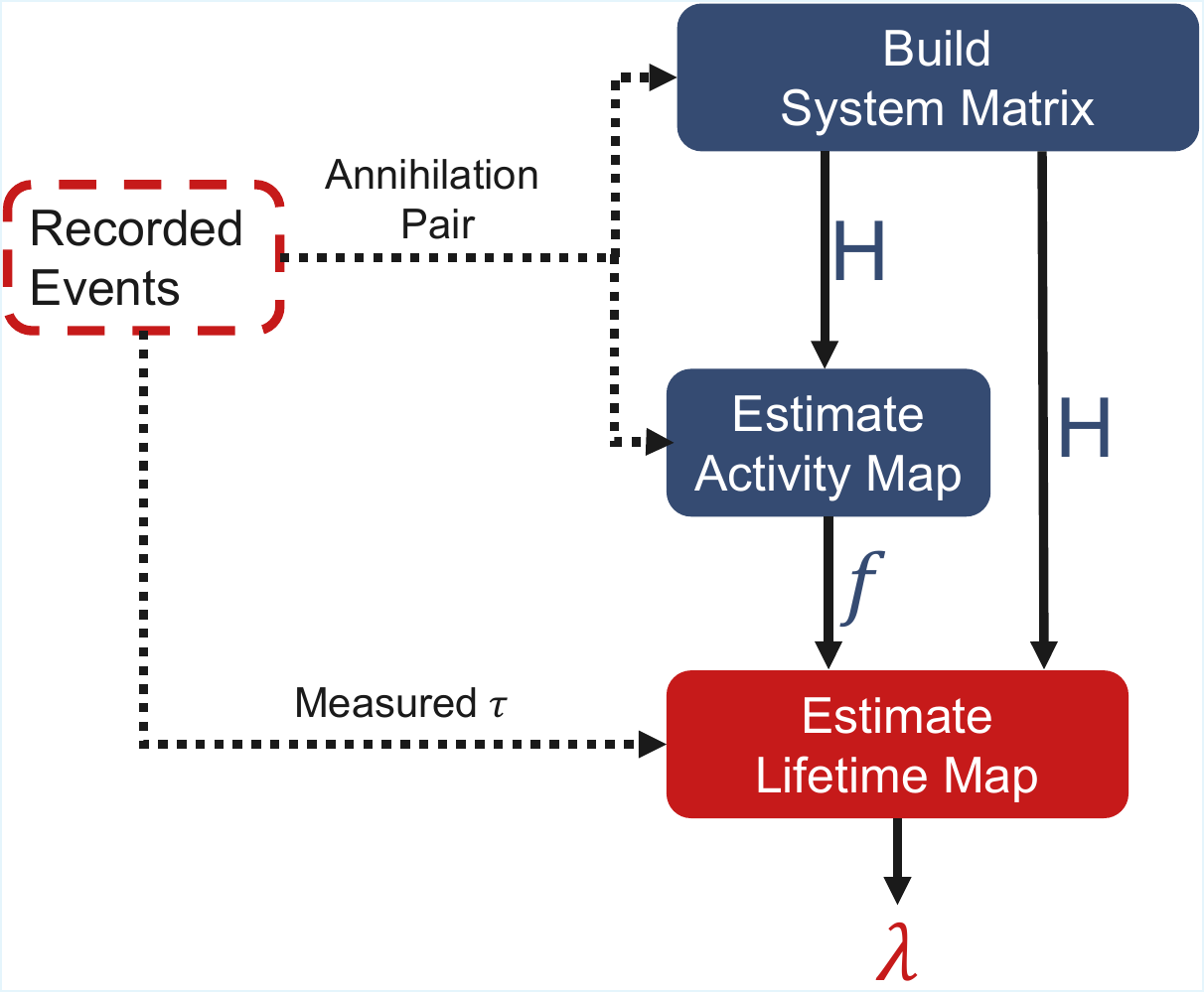}
\caption{Proposed estimation workflow}\label{fig:flow}
\end{figure}

\subsection{Event Representation and Preprocessing}\label{ssec:preprocess}

Each recorded decay is treated as a triple coincidence consisting of two annihilation-photon interactions and one prompt-gamma interaction. In list mode, event \(k\) is stored as
\begin{equation}
 e_k = \{d_{1,k}, d_{2,k}, t_{1,k}, t_{2,k}, d_{p,k}, t_{p,k}\},
\label{eq:eventtuple}
\end{equation}
where \(d_{1,k}\) and \(d_{2,k}\) are the detectors for the annihilation photons in arrival order, \(d_{p,k}\) is the detector for the prompt gamma ray, and \(t_{1,k}, t_{2,k}, t_{p,k}\) are the corresponding detection times.

The annihilation-pair TOF observable is
\begin{equation}
\Delta t_k = t_{2,k} - t_{1,k}.
\end{equation}
Because the annihilation photons are indexed in arrival order, \(\Delta t_k\ge 0\). With TOF-bin width \(\Delta\tau_{\mathrm{bin}}\) and TOF window \(\Delta t_{\max}\), the TOF bin index is
\begin{equation}
\xi_k = \min\!\left(\left\lfloor \frac{\Delta t_k}{\Delta \tau_{\mathrm{bin}}} \right\rfloor,\, N_{\xi}-1\right),
\qquad
N_{\xi}=\left\lceil \frac{\Delta t_{\max}}{\Delta\tau_{\mathrm{bin}}} \right\rceil.
\label{eq:tofbin}
\end{equation}
We denote the ordered annihilation-detector pair by \(\eta_k=(d_{1,k},d_{2,k})\) and the detector-time channel by \(c_k=(\eta_k,\xi_k)\).

The event-wise prompt-to-annihilation time difference is
\begin{equation}
\tau_k = t_{a,k} - t_{p,k},
\qquad
t_{a,k}=\frac{t_{1,k}+t_{2,k}}{2}.
\label{eq:eventlifetime}
\end{equation}
Because the voxel of origin is unknown, \(\tau_k\) contains both the desired lifetime information and an unmodeled flight-path mismatch. In the present single-component exponential surrogate model, we retain only events with \(\tau_k\ge 0\). This filtering removed 3.6\% of simulated detected triples. In the experimental dataset it removed 20.6\% of all recorded triples and 8.3\% of events within the estimated \(^{44}\)Sc regions. The larger fraction among all recorded triples is consistent with enrichment of background or mismatched events among negative-\(\tau_k\) observations. More flexible exponentially modified Gaussian type models should absorb rather than discard such events in future work \cite{steinberger_positronium_2024,mercolli_vivo_2026}. In what follows, \(N_{\mathrm{ev}}\) denotes the number of retained events after this \(\tau_k\ge 0\) filter.

\subsection{Partial System Matrix and Detected Activity}\label{ssec:partial_system}

Let \(\mathcal{C}\) denote the feasible set of detector-time channels implied by the scanner geometry, coincidence window, and TOF binning. Because the annihilation detectors are indexed in arrival order, the total number of such channels grows on the order of \(N_{\mathrm{sensor}}(N_{\mathrm{sensor}}-1)N_{\xi}\), which becomes computationally expensive even in sparse form. In finite acquisitions, however, only a small subset of these channels is observed.

We therefore restrict the forward model to the observed channels,
\begin{equation}
\mathcal{C}_+ = \{c_k : k=1,\dots,N_{\mathrm{ev}}\}
          = \{c\in \mathcal{C}: y_c > 0\},
\end{equation}
where \(y_c\) is the count in channel \(c\). The partial system matrix is defined as the conditional distribution
\begin{equation}
H_{c,j} = P(C=c\mid J=j,\; C\in\mathcal{C}_+,J\in\mathcal{V}), \qquad c\in\mathcal{C}_+,
\label{eq:partialH}
\end{equation} where \(\mathcal{V} = \{1,\ldots,N_{\mathrm{vox}}\}\) is the voxel domain. Under this definition,
\begin{equation}
\sum_{c\in\mathcal{C}_+} H_{c,j} = 1, \qquad j\in\mathcal{V},
\end{equation}
so \(H\) is normalized over observed detector-time channels for each voxel.

Under this parameterization, \(\tilde f_j\) denotes the expected number of retained detected triple coincidences contributed by voxel \(j\) during the acquisition. The observed channel counts satisfy the Poisson model
\begin{equation}
y_c \sim \mathrm{Poisson}(\hat y_c),
\qquad
\hat y_c = \sum_{j=1}^{N_{\mathrm{vox}}} H_{c,j} \, \tilde f_j,
\qquad c\in\mathcal{C}_+.
\label{eq:poissonchannel}
\end{equation}
Thus, activity reconstruction in this manuscript targets the distribution of \emph{retained detected} triples, not the latent total number of all physical decays. In a realistic 3D scanner, many emissions escape the detector geometry and fail to produce triple coincidences. The system matrix is therefore conditioned on retained detected events. Since \(\tilde f_j\) already represents retained detected activity, the source-voxel probability for a randomly selected retained event is proportional to \(\tilde f_j\), whereas the factor \(H_{c,j}\) enters only after conditioning on the observed channel \(c\).

\subsection{Detected Activity Reconstruction}

We estimate \(\tilde f=\{\tilde f_j\}\) by maximizing the Poisson likelihood in Eq.~\eqref{eq:poissonchannel} with expectation-maximization (EM) \cite{shepp_maximum_1982}. For iterate \(n\),
\begin{equation}
\tilde f_j^{(n+1)} = \tilde f_j^{(n)}
\sum_{c\in\mathcal{C}_+} H_{c,j}\frac{y_c}{\hat y_c^{(n)}},
\qquad
\hat y_c^{(n)} = \sum_{j=1}^{N_{\mathrm{vox}}} H_{c,j}\tilde f_j^{(n)}.
\label{eq:emupdate}
\end{equation}
Because \(\sum_{c\in\mathcal{C}_+}H_{c,j}=1\) by construction, the usual EM sensitivity factor is equal to one and does not appear explicitly in Eq.~\eqref{eq:emupdate}. Because \(H\) is stored only on \(\mathcal{C}_+\), each forward- and backprojection visits observed channels only. This is the principal computational advantage of the partial-system-matrix representation in sparse triple-coincidence settings.

\subsection{Effective annihilation-rate estimation}

After the EM iterations, we fix the detected-activity estimate \(\tilde f=\{\tilde f_j\}\) and estimate the voxel-wise annihilation-rate field \(\lambda=\{\lambda_j\}\) from the retained list-mode lifetimes \(\{\tau_k\}\). We consider two complementary estimators: (i) bound-constrained maximum likelihood using L-BFGS-B, and (ii) a conjugate Gamma update with fractional event attribution that provides a closed-form posterior  approximation.

\subsubsection{Posterior event-to-voxel weights}

For a randomly selected retained event, the marginal voxel-of-origin probability is
\begin{equation}
p_j=\frac{\tilde f_j}{\sum_{\ell=1}^{N_{\mathrm{vox}}}\tilde f_\ell}.
\label{eq:main_pj}
\end{equation}
Because \(\tilde f_j\) already denotes retained detected activity, Eq.~\eqref{eq:main_pj} is a marginal retained-event probability and does not include the channel factor \(H_{c,j}\).

For each retained event \(k\) with observed channel \(c_k\), define the unnormalized source score
\begin{equation}
w_{k,j}=H_{c_k,j}\tilde f_j \propto H_{c_k,j}p_j,
\label{eq:unnormalized_weights_main}
\end{equation}
and the normalized channel posterior weight
\begin{equation}
\pi_{k,j}=
\frac{w_{k,j}}{\sum_{\ell=1}^{N_{\mathrm{vox}}} w_{k,\ell}}
=
\frac{H_{c_k,j}\tilde f_j}{\sum_{\ell=1}^{N_{\mathrm{vox}}} H_{c_k,\ell}\tilde f_\ell}.
\label{eq:posterior_weights_main}
\end{equation}
Here \(\pi_{k,j}=P(J=j\mid C_k=c_k,\tilde f)\) is the posterior probability, under the detected-activity model alone, that event \(k\) originated from voxel \(j\). Consequently, \(\sum_{j=1}^{N_{\mathrm{vox}}}\pi_{k,j}=1\) for each retained event.

\subsubsection{Maximum-Likelihood Estimator}

Conditional on \(\tilde f\) and the observed channel \(c_k\), the retained-event lifetime density is
\begin{equation}
p(\tau_k\mid C_k=c_k,\tilde f,\lambda)
=
\sum_{j=1}^{N_{\mathrm{vox}}}
\pi_{k,j}\lambda_j e^{-\lambda_j\tau_k},
\qquad \tau_k\ge 0.
\end{equation}
Up to an additive constant that does not depend on \(\lambda\), the conditional log-likelihood and its gradient are
\begin{equation}
\begin{aligned}
\label{eq:loglik_r}
\ell(\lambda\mid \tilde f)
&=
\sum_{k=1}^{N_{\mathrm{ev}}}
\log\!\left(
\sum_{j=1}^{N_{\mathrm{vox}}}
\pi_{k,j}\lambda_j e^{-\lambda_j\tau_k}
\right),
\\[6pt]
\frac{\partial \ell}{\partial \lambda_j}
&=
\sum_{k=1}^{N_{\mathrm{ev}}}
r_{k,j}
\left(\frac{1}{\lambda_j}-\tau_k\right).
\end{aligned}
\end{equation}
where
\begin{equation}
r_{k,j}(\lambda)=
\frac{\pi_{k,j}\,\lambda_j e^{-\lambda_j\tau_k}}
{\sum_{\ell=1}^{N_{\mathrm{vox}}} \pi_{k,\ell}\,\lambda_{\ell}e^{-\lambda_{\ell}\tau_k}}
\label{eq:responsibilities}
\end{equation}
is the event-specific responsibility. We maximize Eq.~\eqref{eq:loglik_r} subject to \(\lambda_j\ge 0\) using L-BFGS-B \cite{virtanen_scipy_2020}. Because \(\pi_{k,j}\propto w_{k,j}\) with a channel-specific normalizing constant independent of \(\lambda\), this likelihood is equivalent to the formulation based on the unnormalized scores \(w_{k,j}\).

\subsubsection{Posterior-weighted Gamma approximation}

Since the lifetime follows an exponential distribution, an analytic conditional posterior is available through conjugate Bayesian inference when the voxel labels are known. To obtain a fast closed-form approximation, we place independent Gamma priors
\begin{equation}
\lambda_j\sim \mathrm{Gamma}(\alpha_0,\beta_0),
\qquad \alpha_0>0,\ \beta_0>0,
\end{equation}
using the shape-rate parameterization. If the latent labels \(Z_k\), namely the voxel of origin for each event, were known, the conditional posterior would remain Gamma with sufficient statistics
\begin{equation}
n_j=\sum_{k=1}^{N_{\mathrm{ev}}}I(Z_k=j),
\qquad
S_j=\sum_{k=1}^{N_{\mathrm{ev}}}I(Z_k=j)\tau_k.
\label{eq:pigfly}
\end{equation}

The latent labels are unknown, and their exact posterior probabilities after observing \(\tau_k\) depend on \(\lambda\). For a fast one-pass approximation, we replace the unknown indicators \(I(Z_k=j)\) by the normalized channel-posterior weights \(\pi_{k,j}=P(J=j\mid C_k=c_k,\tilde f)\), which are available once \(\tilde f\) has been reconstructed. This gives the effective sufficient statistics
\begin{equation}
n_j^{\mathrm{eff}}=\sum_{k=1}^{N_{\mathrm{ev}}}\pi_{k,j},
\qquad
S_j^{\mathrm{eff}}=\sum_{k=1}^{N_{\mathrm{ev}}}\pi_{k,j}\tau_k.
\label{eq:effstats_r}
\end{equation}
By construction, \(\sum_{j=1}^{N_{\mathrm{vox}}} n_j^{\mathrm{eff}} = N_{\mathrm{ev}}\).

The resulting posterior-weighted Gamma approximation is
\begin{equation}
\lambda_j\mid \mathcal D
\approx
\mathrm{Gamma}\!\left(\alpha_0+n_j^{\mathrm{eff}},\,\beta_0+S_j^{\mathrm{eff}}\right),
\label{eq:gammaapprox_r}
\end{equation}
with posterior mean
\begin{equation}
\widehat{\lambda}_j^{\mathrm{EB}}
=
\frac{\alpha_0+n_j^{\mathrm{eff}}}{\beta_0+S_j^{\mathrm{eff}}}.
\label{eq:postmean_r}
\end{equation}
For display we report the reciprocal posterior-mean rate
\begin{equation}
\widehat{\tau}_j^{\mathrm{eff}} =\frac{1}{\widehat{\lambda}_j^{\mathrm{EB}}}.
\label{eq:taudef_r}
\end{equation}
This approximation uses channel-conditioned posterior labels rather than the exact lifetime-dependent responsibilities \(r_{k,j}\), and is intended as a fast surrogate to the full mixture posterior.

\paragraph{Statistical interpretation.}
The proposed estimator can be viewed as a finite mixture model in which voxels are latent components and retained events are observations. The channel-posterior weights \(\pi_{k,j}\) encode voxel-of-origin probabilities based on detector-time information and the reconstructed detected-activity image, whereas the responsibilities \(r_{k,j}\) additionally incorporate the lifetime model. Under the idealized retained-event model and standard regularity assumptions, identifiability and large-sample consistency can be established for the fixed-channel, positive-support formulation of this model. Formal statements and derivations are provided in Additional file~1.

\subsection{Simulation study}

The simulation phantom occupied a \(26\times 26\times 6\) grid with voxel size \(1\times 1\times 3\)~cm\(^3\). The background was an ellipsoid centered at the origin with effective rate \(\lambda=0.5\)~ns\(^{-1}\). Four ellipsoidal inclusions, centered at \((\pm 4~\mathrm{cm},\pm 4~\mathrm{cm},0)\), had twice the background activity and effective rates \(0.4\), \(0.6\), \(0.8\), and \(1.0\)~ns\(^{-1}\). A total of \(100\times 10^6\) simulated decays was distributed according to these activity levels. Ground truth for generated events can be seen in Figures~\ref{fig:phantom_activity}~and~\ref{fig:phantom_lambda}.

\begin{figure}[htbp]
\centering{\includegraphics[width=4in]{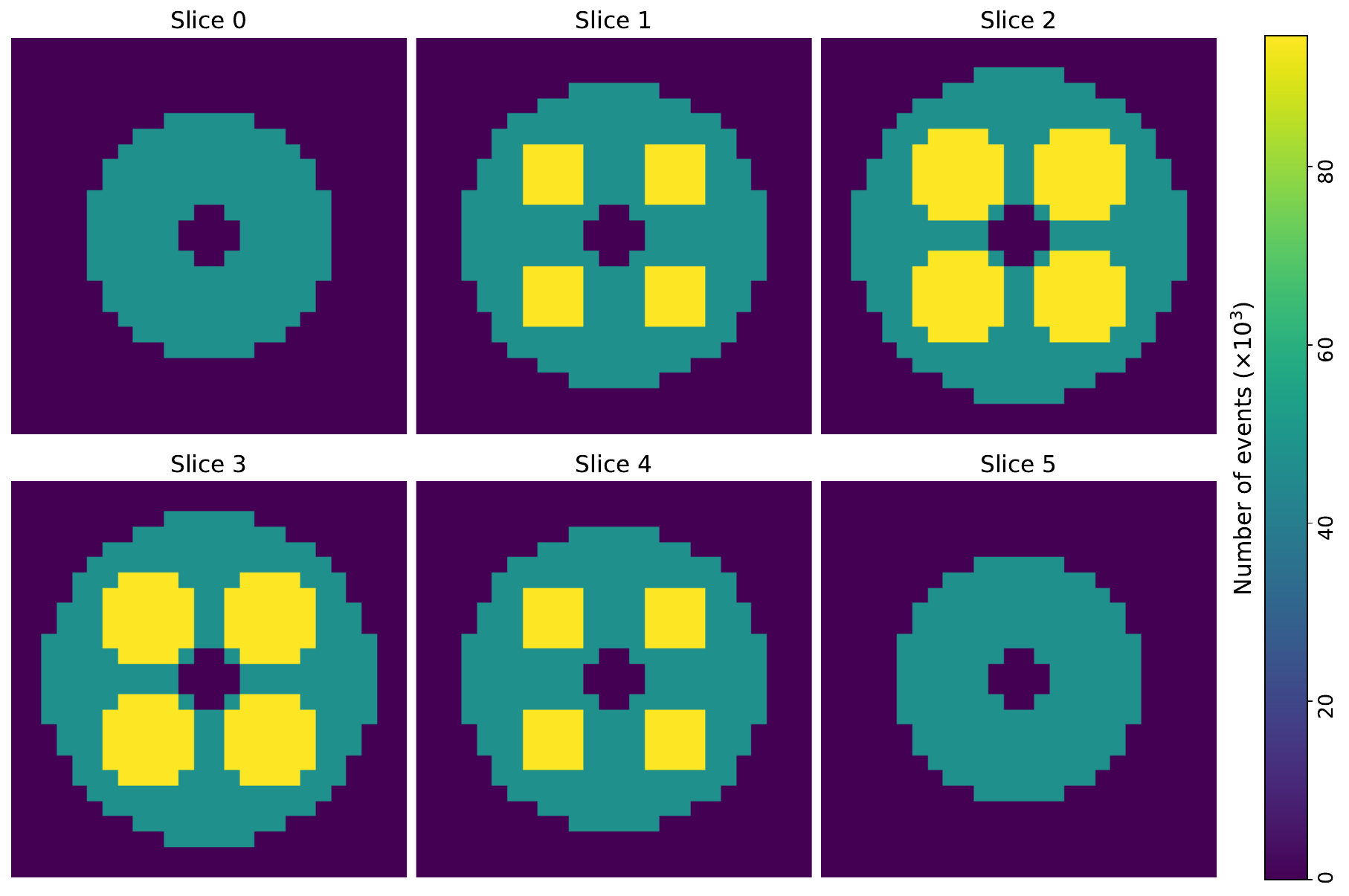}}
\caption{\label{fig:phantom_activity} Ground-truth activity map of the
simulated 3D phantom. The color scale represents the number of
annihilation events in thousands. The central region is the uniform
background; brighter regions mark higher-activity inclusions.}
\end{figure}

\begin{figure}[htbp]
\centering{\includegraphics[width=4 in]{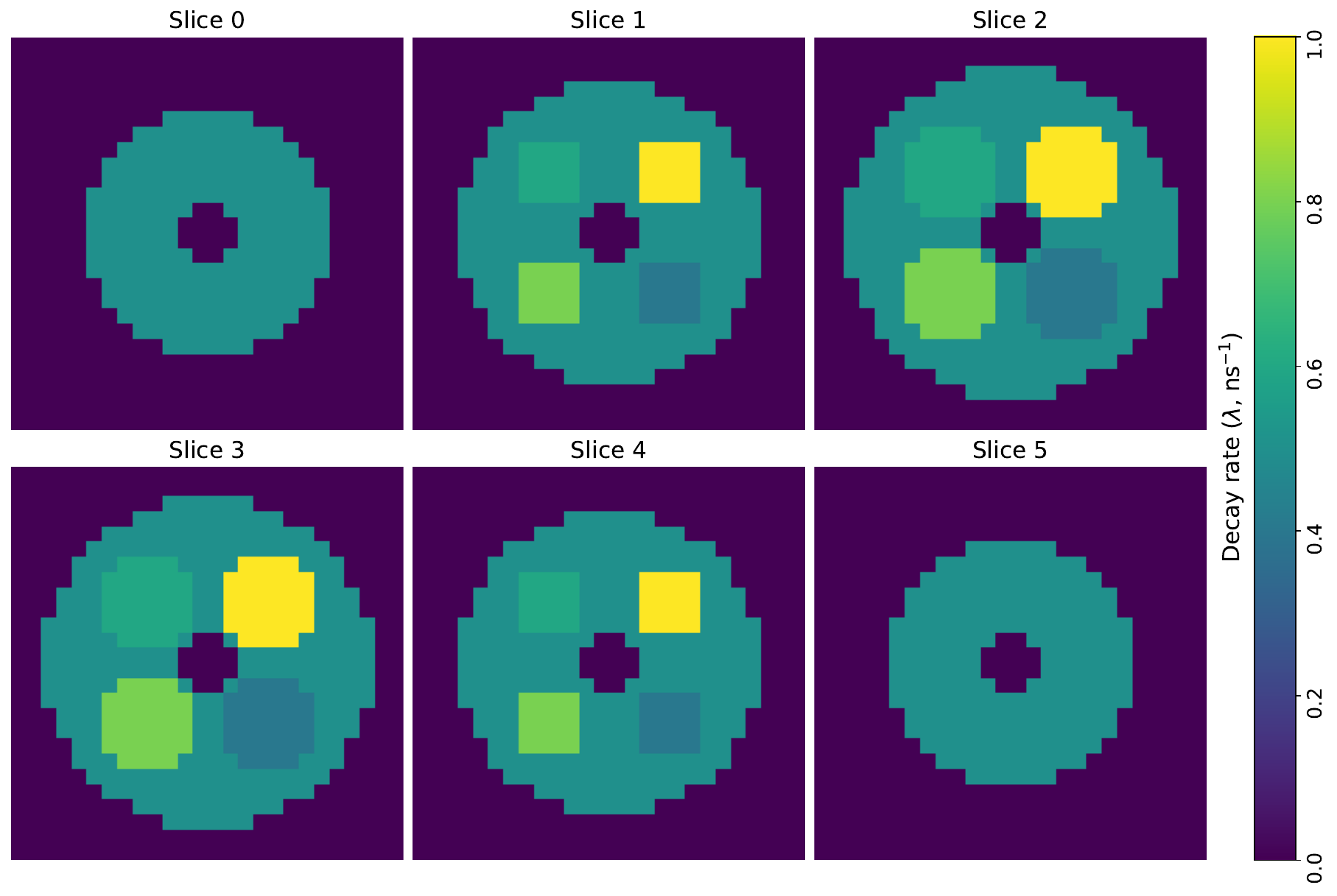}}
\caption{\label{fig:phantom_lambda} Ground-truth lifetime map. The color
scale shows positronium decay rates (ns$^{-1}$). Four inclusions with
increasing $\lambda$ are embedded in the uniform background to evaluate
sensitivity to lifetime variation.}
\end{figure}

The scanner model was cylindrical with diameter 60~cm, axial length 25~cm, 288 detectors per ring, and 6 axial rings, for a total of 1728 detector elements. The coincidence resolving time was set to 200~ps (FWHM). Event generation sampled emission points within voxels, isotropic directions for the prompt and annihilation photons, and an exponential lifetime with the voxel-wise effective rate. Photon propagation determined detector hits and arrival times.

Of the initial \(100\times 10^6\) decays, approximately \(4.6\times 10^6\) were accepted as detected triple coincidences and approximately \(4.5\times 10^6\) remained after removing events with \(\tau_k<0\). Ground truth for the detected activity can be seen in Figure~\ref{fig:captured_activity}.

\begin{figure}[H]
\centering{\includegraphics[width=4in]{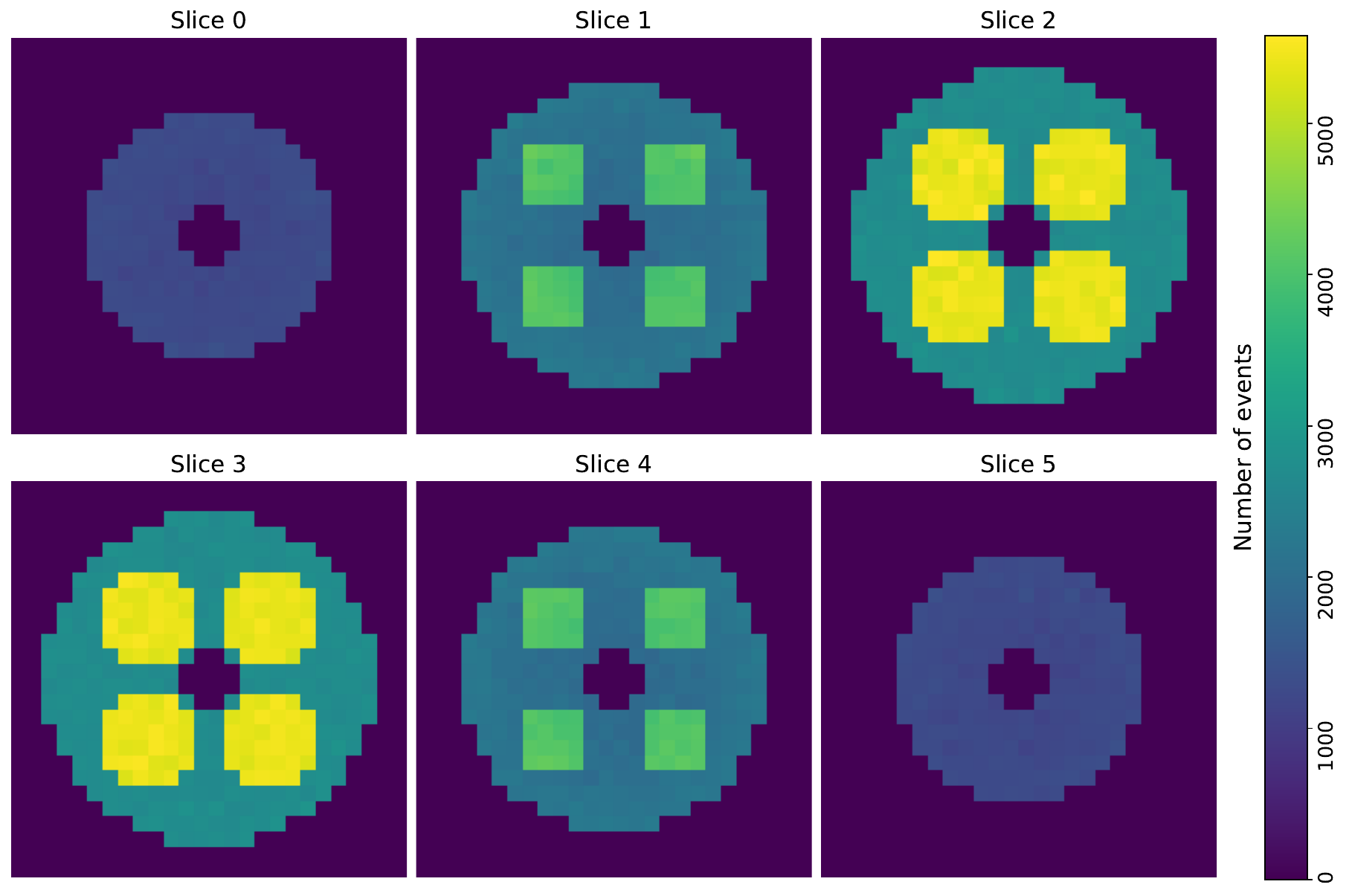}}
\caption{\label{fig:captured_activity} Detected activity after physics. Counts are retained triple coincidences per voxel and represent the data available to estimate.}
\end{figure}

Because the simulation was generated from the same single-component model used in estimation, lifetime recovery in this experiment measures estimator behavior under model match. To avoid conflating lifetime recovery with extreme low-count instability, rate-map comparisons were emphasized for voxels with at least 10 retained detected triples. The simulation and estimation settings are summarized in Table~\ref{tab:sim}.

\begin{table}[htbp]
\caption{Simulation setup and estimation summary}\label{tab:sim}
\begin{tabular}{@{}p{0.63\linewidth}p{0.25\linewidth}@{}}
\toprule
Parameter & Value \\
\midrule
Image grid & \(26\times26\times6\) voxels \\
Voxel size & \(1\times1\times3\) cm\(^3\) \\
Total simulated decays & \(100\times 10^6\) \\
Detected triple coincidences & \(\approx 5.0\times10^5\) \\
Retained events (\(\tau_k\ge 0\)) & \(\approx 4.8\times10^5\) \\
EM iterations & 5 \\
L-BFGS-B iterations & 10 \\
Bayesian hyperparameters & \(\alpha_0=\beta_0=10^{-4}\) \\
\botrule
\end{tabular}
\end{table}

\subsection{Experimental dataset}

The experimental evaluation used a 3D triple-coincidence dataset acquired with a modular J-PET prototype scanner \citep{tayefi_ardebili_assessing_2024} and a NEMA image-quality phantom. The three largest spheres were filled with \(^{44}\)Sc dissolved in water and the three smallest spheres were filled with \(^{18}\)F; the background compartment contained water without radioactivity. Acquisition details and phantom preparation are described in the dedicated experimental report by Das et al. \cite{das_first_2025}. Because $^{44}\mathrm{Sc}$ decay includes a prompt $\gamma$ emission, it produces true triple-coincidence events $(2\gamma_{511}+\gamma_p)$ required for PLI, whereas $^{18}\mathrm{F}$ produces only 511\,keV annihilation pairs and does not contribute true triples.

The estimation grid comprised \(125\times125\times15\) voxels with voxel size \(0.33\times0.33\times1\)~cm\(^3\). A total of 458\,639 triple-coincidence events was recorded, of which approximately \(3.64\times10^5\) remained after the \(\tau_k\ge 0\) filter. Activity was estimated with five EM iterations. The experimental setup and estimation summary are listed in Table~\ref{tab:exp}.

\begin{table}[htbp]
\caption{Experimental dataset summary}\label{tab:exp}
\begin{tabular}{@{}p{0.63\linewidth}p{0.25\linewidth}@{}}
\toprule
Parameter & Value \\
\midrule
Image grid & \(125\times125\times15\) voxels \\
Voxel size & \(0.33\times0.33\times1\) cm\(^3\) \\
Recorded triple coincidences & 458\,639 \\
Retained events (\(\tau_k\ge 0\)) & \(\approx 3.64\times10^5\) \\
Observed detector-time channels & \(\approx 6.2\times10^4\) \\
EM iterations & 5 \\
\botrule
\end{tabular}
\end{table}

The $^{44}$Sc spheres contain homogeneous activity, so no positronium lifetime contrast is expected between them. The reconstructed lifetime maps therefore demonstrate the behaviour of the reconstruction framework rather than contrast between different environments. As described in the methodology, the present analysis uses a simplified lifetime model and focuses on the reconstruction framework rather than a full physical characterization of positronium decay.

\section{Results}\label{sec:results}

This section is organized in two parts. We first present results from the simulation studies, followed by the corresponding analysis on the experimental dataset.

\subsection{Simulation Results}

Figure~\ref{fig:f_recreated} shows the reconstructed activity obtained
using five EM iterations applied to the partial system matrix. Because
the statistical model is defined for detected triple coincidences,
the reconstruction estimates the distribution of activity that produced the detected
data. Comparisons are therefore made to the detected activity map in
Figure~\ref{fig:captured_activity}. The reconstruction recovers the spatial structure of all regions and maintains the expected relative intensities across slices. Small deviations appear at the boundaries between regions of differing activity, where finite detector coverage and reduced statistics near the edges produce mild blurring.

\begin{figure}[htbp]
\centering{\includegraphics[width=4in]{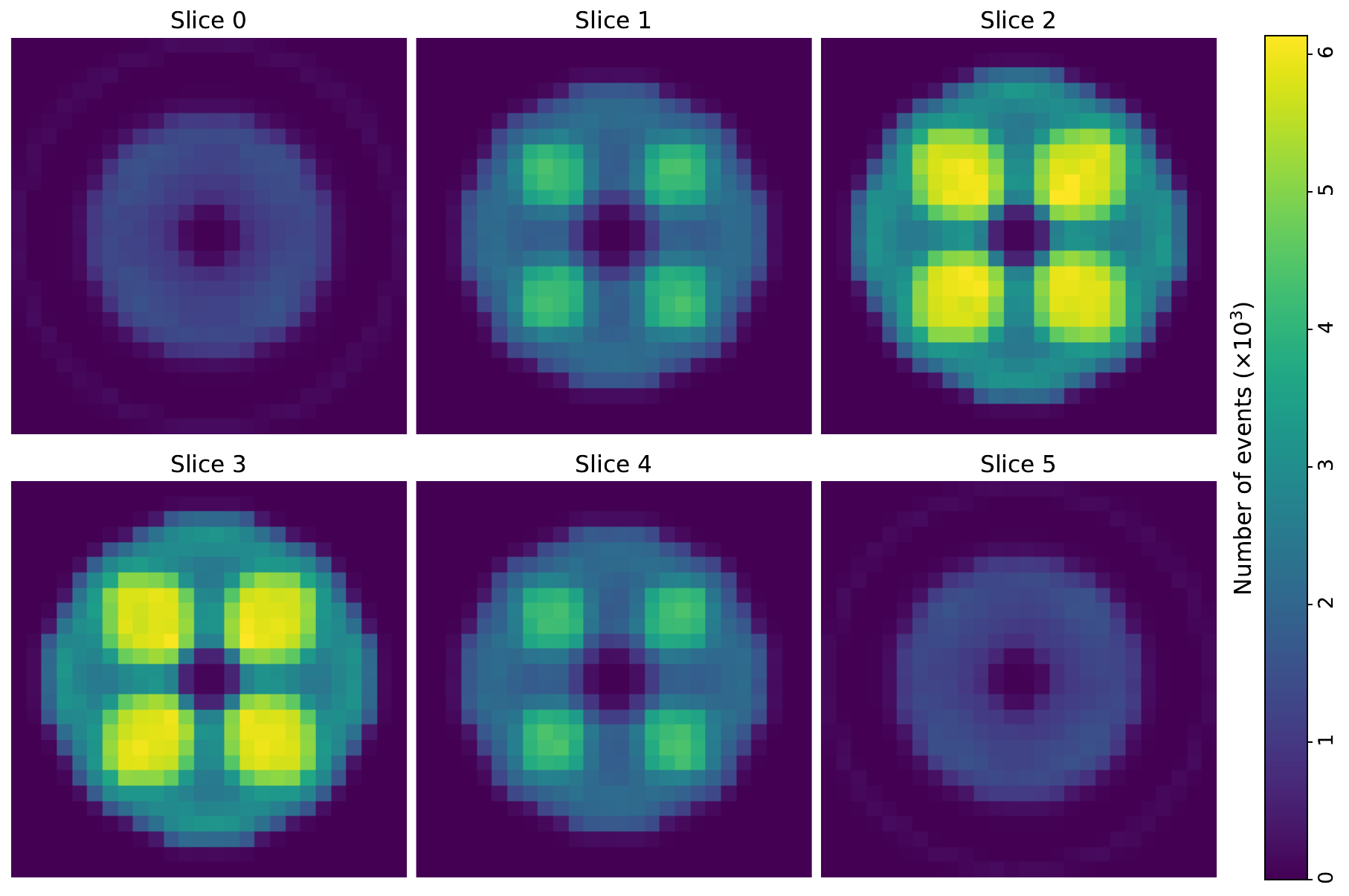}}
\caption{\label{fig:f_recreated} Reconstructed detected activity distribution across all six axial layers. Each slice shows consistent estimated triple-coincidence counts per voxel.}
\end{figure}

To quantify activity accuracy, Figure~\ref{fig:boxplot_ferror} reports
voxel-wise normalized absolute errors between the reconstructed and
detected activity maps, grouped by ground-truth activity level. Median errors are near zero across all groups, indicating limited bias. The high-activity group shows a slightly broader distribution and a small downward shift, consistent with boundary blur and higher variance from fewer counts concentrated in compact inclusions. As discussed, these estimates correspond to detected events rather than total emissions. A geometric approximation for total-decay recovery is provided in Additional file~1.

\begin{figure}[H]
\centering{\includegraphics[width=4in]{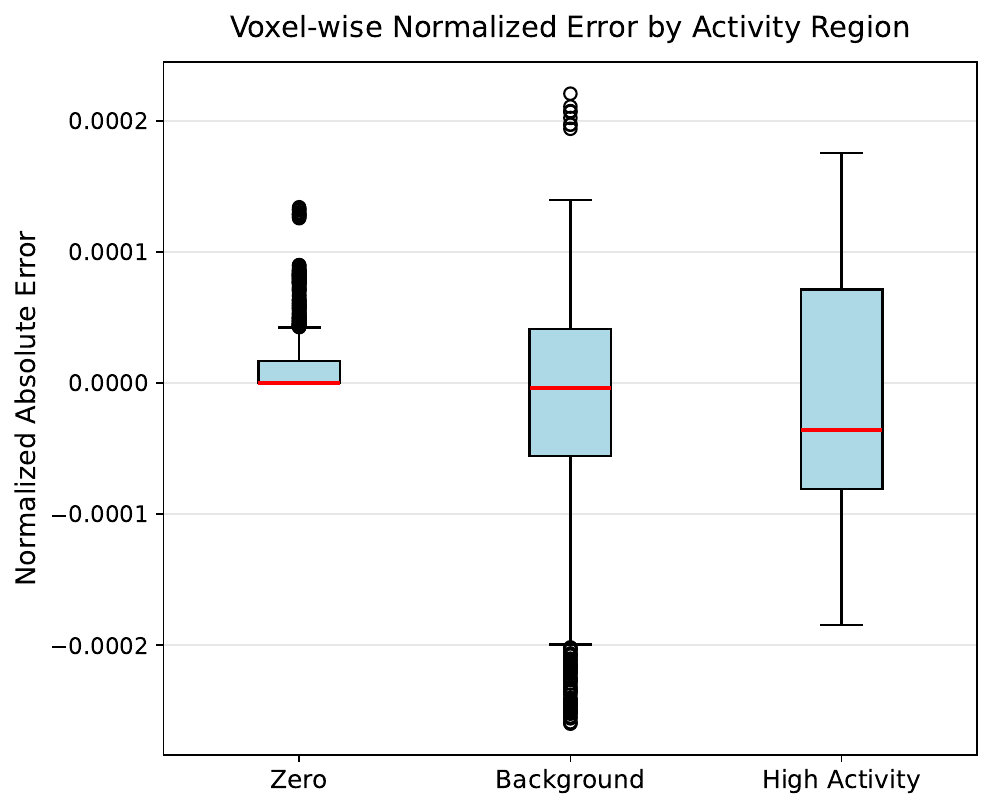}}
\caption{\label{fig:boxplot_ferror} Distribution of voxel-wise normalized absolute errors between reconstructed and detected activity, grouped by activity level (zero, background, high). Boxes summarize the spread and bias within each region.}
\end{figure}

For effective rate estimation, both L-BFGS-B and the posterior-weighted Gamma approximation produced highly accurate reconstructions in the simulation. The posterior-weighted conjugate Bayesian estimator shown in Figure~\ref{fig:lambda_conjugate} reproduces the same spatial pattern and dynamic range of effective annihilation rates. Minor smoothing at region interfaces is consistent with the averaging nature of fractional attribution and the grid resolution. Figure~\ref{fig:lambda_map_compare} compares voxel-wise relative errors for both lifetime estimators across the four inclusion regions. Median errors are near zero for all groups. The Bayesian method exhibits tighter distributions, particularly at \(\lambda=0.5\), indicating reduced voxel-level variance. At higher decay rates (\(\lambda \ge 0.8\)), both methods show slight underestimation, consistent with fewer detected events in shorter-lifetime regions. In the 4056-voxel simulation, L-BFGS-B (10 iterations) required 74.46~s, whereas the analytic Bayesian estimator required 2.76~s on the same CPU. Image results of the L-BFGS-B estimator are provided in Additional file~1.

\begin{figure}[htbp]
\centering{\includegraphics[width=4in]{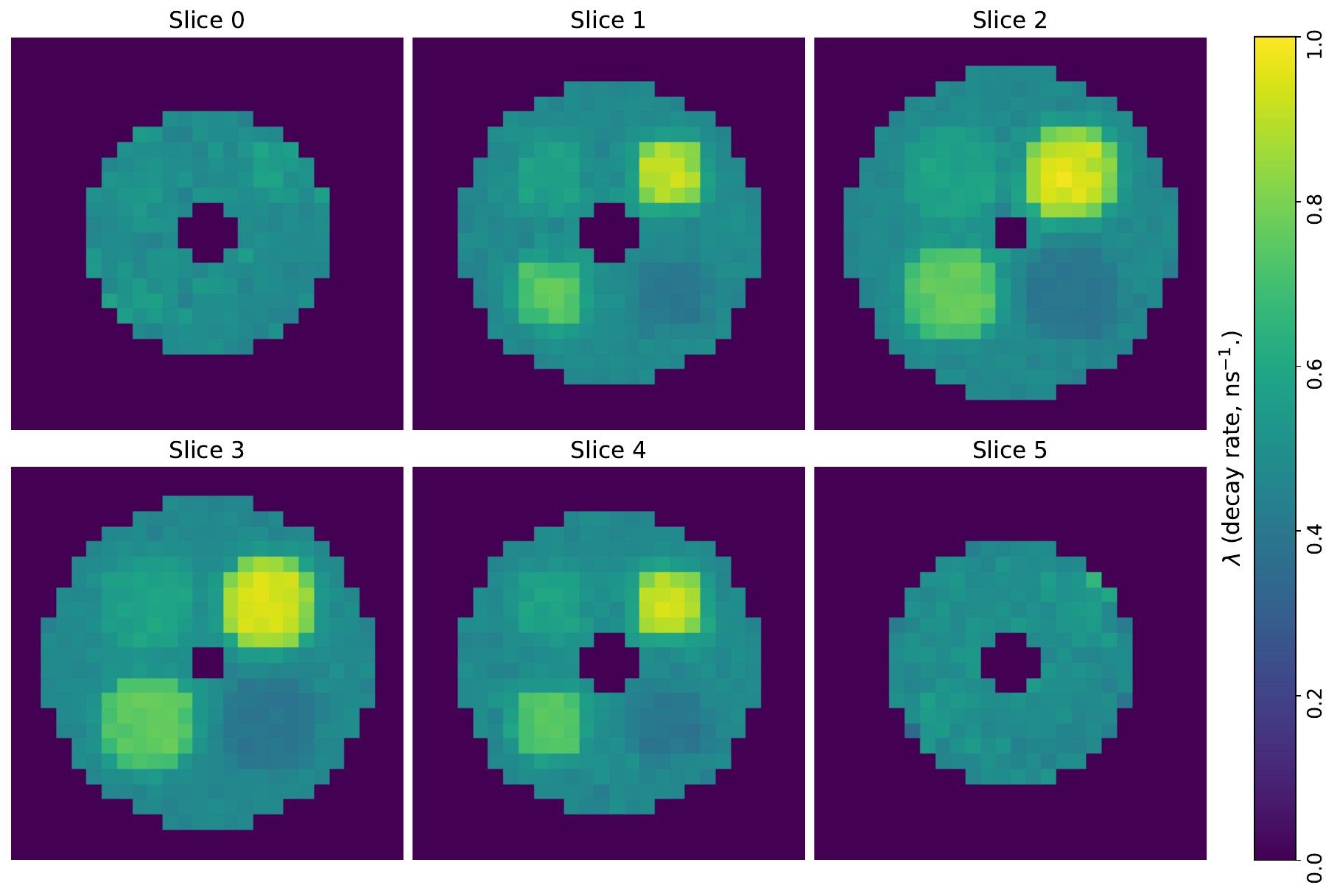}}
\caption{\label{fig:lambda_conjugate} Voxel-wise effective annihilation-rate map reconstructed with the posterior-weighted conjugate Bayesian update. Spatial contrast and magnitude agree with the ground truth.}
\end{figure}

\begin{figure}[htbp]
\centering{\includegraphics[width=4in]{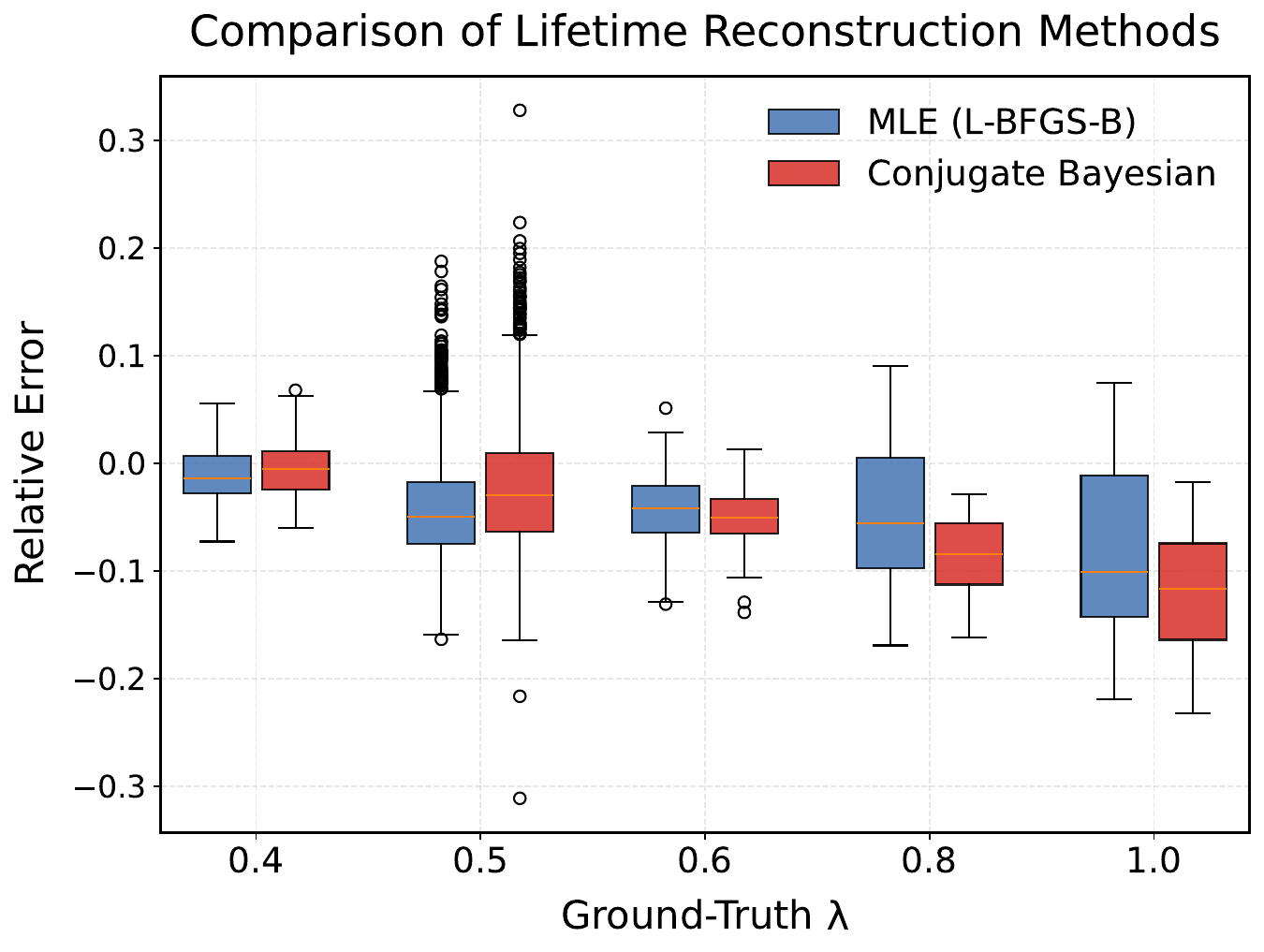}}
\caption{\label{fig:lambda_map_compare} Voxel-wise relative error of reconstructed decay rates across ground-truth \(\lambda\) regions, comparing L-BFGS-B and conjugate Bayesian estimators.}
\end{figure}

\subsection{Experimental Results}

The estimated detected-activity image is shown in Figure~\ref{fig:jactivity}. The three \(^{44}\)Sc-filled spheres are visible in the central slices, and low-level activity appears in nominally cold regions. This background is consistent with false or misassociated triples, which is expected in the absence of an explicit random-triple correction step.

\begin{figure}[htbp]
\centering
\includegraphics[width=4in]{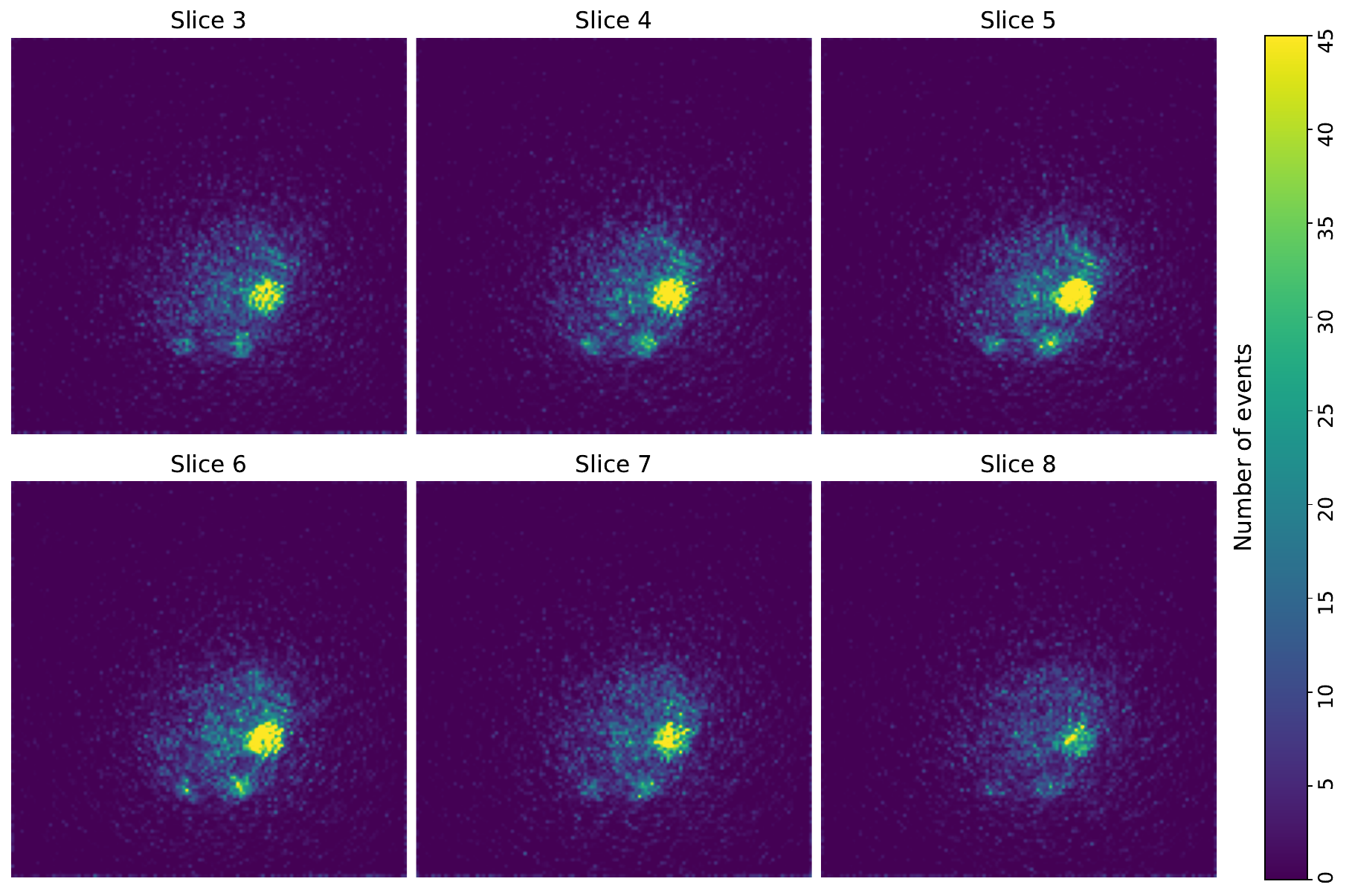}
\caption{Reconstructed detected-activity image for the experimental J-PET phantom data. Central slices show the three \(^{44}\)Sc-filled spheres identified from retained triple coincidences; low-level activity in nominally cold regions is consistent with false or misassociated triple events.}\label{fig:jactivity}
\end{figure}

The voxel-wise effective annihilation-rate map obtained with the posterior-weighted Gamma approximation is shown in Figure~\ref{fig:jrate}. With 234\,375 voxels, approximately \(6.2\times10^4\) observed detector-time channels, and about \(3.64\times10^5\) retained events, the effective rate field was estimated in approximately 3~s on a single CPU without explicit parallelization.

\begin{figure}[htbp]
\centering
\includegraphics[width=4in]{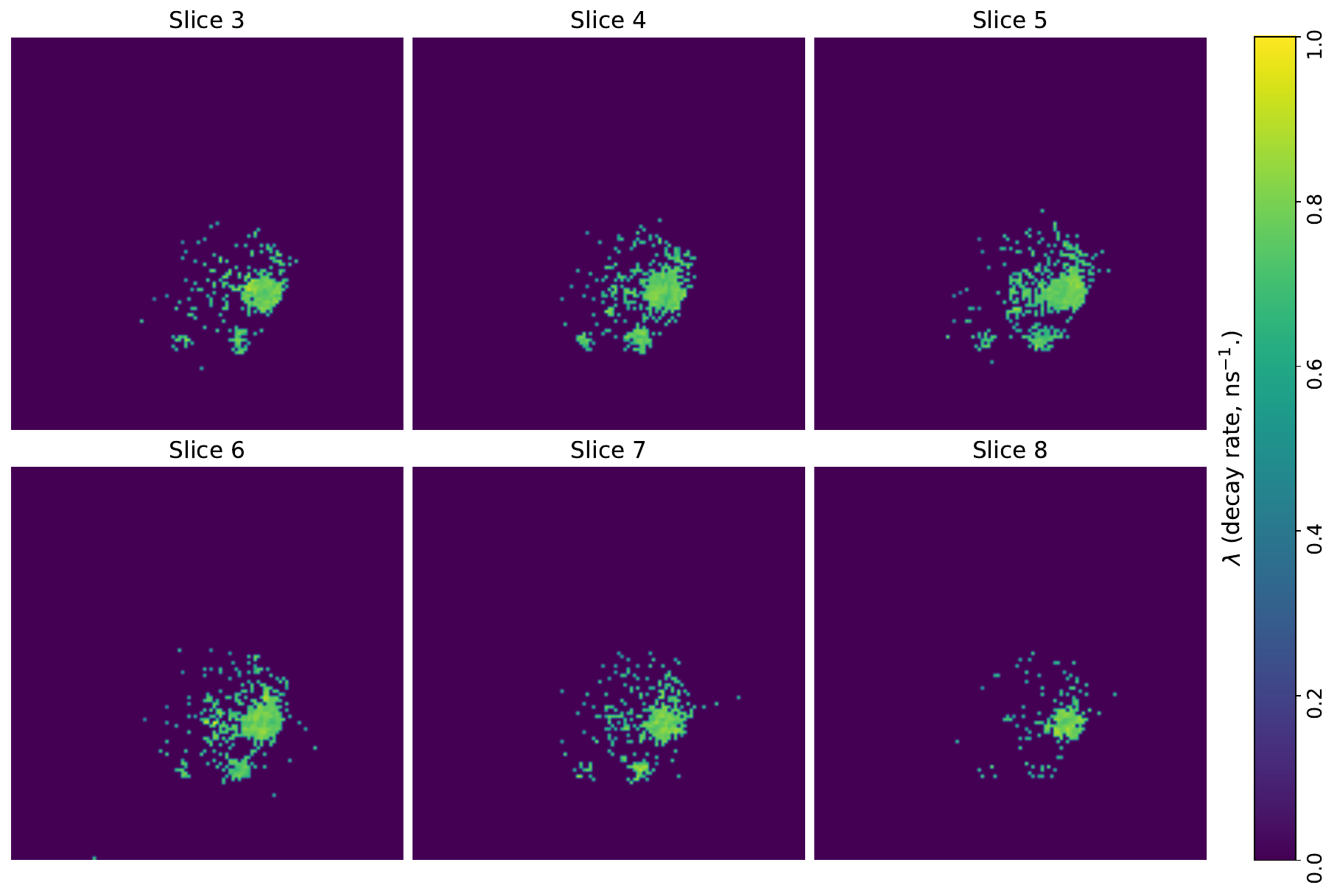}
\caption{Voxel-wise effective annihilation-rate map reconstructed from the experimental J-PET phantom dataset using the posterior-weighted Gamma approximation. Rates are reported in ns\(^{-1}\); the map is intended as a single-component surrogate summary of the observed timing signal.}\label{fig:jrate}
\end{figure}

Figure~\ref{fig:jhist} summarizes the whole-volume distribution of effective rates and reciprocal posterior-mean rates. The median effective rate was \(\tilde\lambda=0.77\)~ns\(^{-1}\), corresponding to \(\tilde{\tau}_{\text{eff}}=1.298\)~ns. This value is of the same order as the approximately 1.10~ns effective lifetime reported in the reference J-PET phantom analysis \cite{das_first_2025}. However, the comparison should be interpreted qualitatively rather than as a direct validation because the present analysis uses a single-component surrogate model and does not include explicit random-triple subtraction or detector-response convolution.

\begin{figure}[htbp]
\centering
\includegraphics[width=4in]{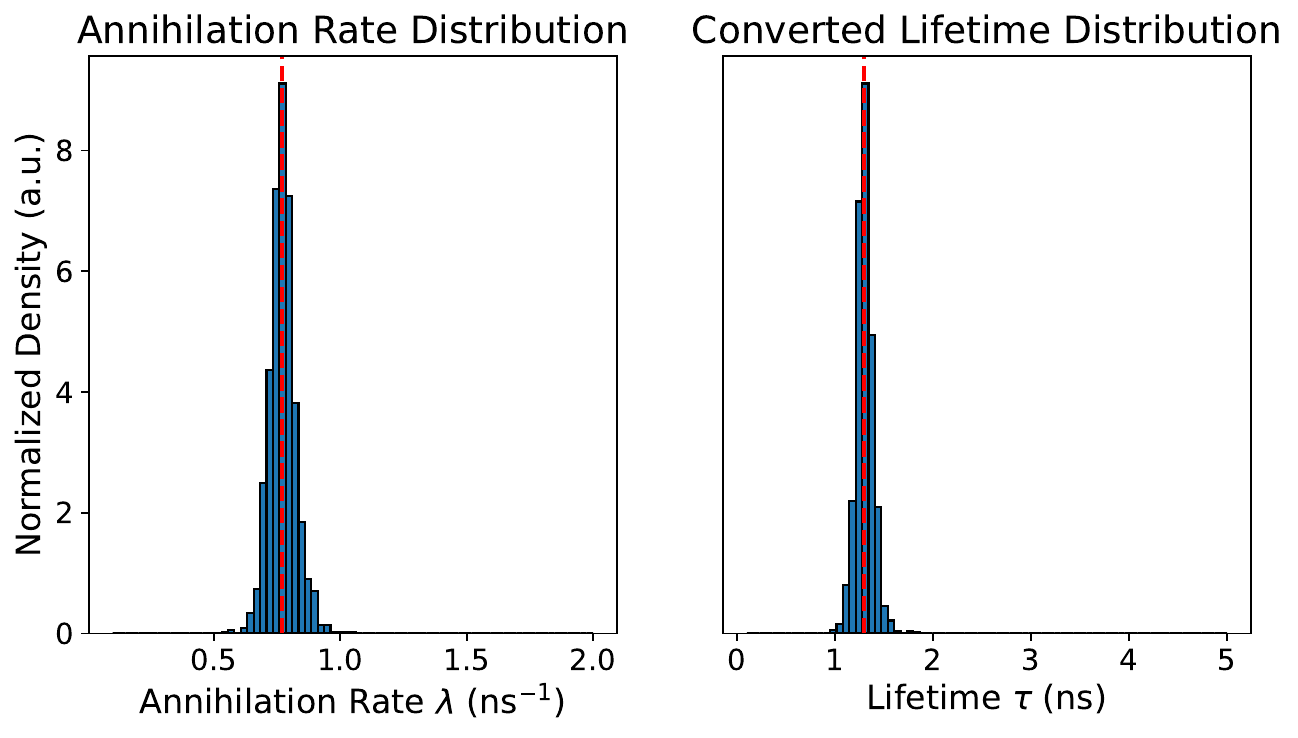}
\caption{Whole-volume histogram of voxel-wise effective annihilation rates for the experimental dataset. The corresponding reciprocal posterior-mean rates provide the effective lifetime summary discussed in the text.}\label{fig:jhist}
\end{figure}

\section{Discussion}\label{sec:discussion}

The results support the main computational idea of this work. Restricting the forward model to detector-time channels that are actually observed makes fully three-dimensional reconstruction feasible for sparse triple-coincidence data. In both simulation and experiment, the partial system matrix produced stable detected-activity reconstructions and enabled fast voxel-wise lifetime estimation. For realistic scanners the number of physically possible detector-time combinations is extremely large, while only a small fraction are populated during a finite acquisition. Conditioning the system matrix on observed channels therefore reduces memory and computational cost while preserving the Poisson data model.

The event-to-voxel weighting formulation provides a practical way to incorporate list-mode lifetime measurements into voxel-wise estimation. Because the source voxel of each event is unknown, events are distributed across voxels according to their posterior probabilities under the system matrix and the reconstructed detected activity. This fractional attribution produces weighted counts and weighted lifetime sums that can be accumulated directly from list-mode data. In practice this allows stable estimation even when individual voxels contain relatively few detected triple coincidences.

The conjugate Bayesian estimator then converts these weighted statistics into voxel-wise rate estimates through a closed-form Gamma update. Because the posterior is analytic, the lifetime map can be estimated without iterative optimization. In the simulation study this produced estimates comparable to the L-BFGS-B maximum-likelihood solution while requiring substantially less computation time. This computational advantage becomes more important as the reconstruction grid grows and the number of voxels increases.

The lifetime model used here is intentionally simple. Recent experimental work has shown that realistic positronium signals may require multi-component lifetime models, explicit treatment of detector timing response, and corrections for random triple coincidences \cite{steinberger_positronium_2024,mercolli_first_2025,mercolli_vivo_2026}. The present exponential model therefore provides a compact representation of the observed timing signal rather than a full physical description of positronium decay.

The Bayesian update should also be interpreted in this context. The event-to-voxel weights are computed from the channel information and the detected-activity estimate and are then treated as fixed when estimating the lifetime field. The resulting posterior therefore reflects uncertainty within this surrogate model rather than the full physical uncertainty associated with detector calibration, background contamination, or model mismatch.

The experimental results demonstrate that the framework remains practical for realistic datasets. Activity reconstruction identifies the regions containing \(^{44}\)Sc activity in the phantom, and the effective-rate map can be computed in a few seconds for a reconstruction grid containing more than two hundred thousand voxels. These results indicate that the partial system matrix and analytic lifetime update provide a scalable computational foundation for three-dimensional positronium lifetime imaging.

An important probabilistic distinction is that the partial system matrix \(H_{c,j}=P(C=c\mid J=j,\text{retained})\) is a forward channel model used for activity reconstruction, whereas \(\pi_{k,j}=P(J=j\mid C=c_k,\tilde f)\) is the reverse voxel-attribution probability obtained from Bayes' rule. The marginal retained-event mixture weights satisfy \(p_j\propto \tilde f_j\) because \(\tilde f_j\) already represents retained detected activity. Keeping these objects distinct, namely \(\tilde f_j\), \(H_{c,j}\), and \(\pi_{k,j}\), is essential for a coherent probabilistic interpretation of the two-stage estimator.

\section{Conclusions}\label{sec:conclusion}

We developed a three-dimensional statistical framework for positronium lifetime imaging that combines a TOF-aware partial system matrix, event-to-voxel fractional attribution, and a conjugate Bayesian lifetime estimator. Restricting the system matrix to observed detector-time channels preserves the Poisson forward model while making fully three-dimensional reconstruction computationally practical for sparse triple-coincidence data. The weighting formulation provides a principled way to incorporate list-mode lifetime measurements without explicit event assignment, and the conjugate Bayesian update yields closed-form voxel-wise estimates that avoid iterative optimization. Simulation results show that the analytic estimator achieves accuracy comparable to maximum likelihood at substantially lower computational cost. Application to a J-PET phantom dataset demonstrates that the framework scales to large volumetric reconstructions and recovers the expected activity structure. Together, these results establish a practical computational foundation for large-scale three-dimensional positronium lifetime imaging.

\section*{List of abbreviations}
EM: expectation-maximization; J-PET: Jagiellonian positron emission tomography; L-BFGS-B: limited-memory Broyden--Fletcher--Goldfarb--Shanno with bound constraints; NEMA: National Electrical Manufacturers Association; PET: positron emission tomography; PLI: positronium lifetime imaging; TOF: time of flight.

\backmatter

\bmhead{Supplementary Information}
Additional methodological details, theoretical derivations, and supplementary figures are provided in Additional file~1.

\section*{Declarations}

\subsection*{Ethics approval and consent to participate}
Not applicable.

\subsection*{Consent for publication}
Not applicable.

\subsection*{Availability of data and materials}
The simulated datasets generated during the current study and the processed phantom datasets analysed during the current study are available from the corresponding author on reasonable request. Raw J-PET phantom data are not publicly available because access may require collaboration approval, but they may be available from the corresponding author on reasonable request and with permission of the data owners.

\subsection*{Competing interests}
PM is an inventor on patents related to positronium imaging (Poland PL~227658, Europe EP~3039453, United States US~9,851,456). All other authors declare that they have no competing interests.

\subsection*{Funding}
C.-M.K. was supported in part by NIH grant R01-EB029948. H.-H.H. was supported in part by NSF grants DMS-1924792 and DMS-2318925. M.D., P.M., A.P., S.S., and E.S. were supported in part by the National Science Centre of Poland through grants 2021/42/A/ST2/00423, 2021/43/B/ST2/02150, 2022/47/I/NZ7/03112, and 2023/50/E/ST2/0057; by the Ministry of Science and Higher Education through grant SPUB/SP/627733/2025; by the SciMat and qLife Priority Research Areas budget under the Excellence Initiative -- Research University program at Jagiellonian University; and by the European Research Council through Advanced Grant POSITRONIUM no.~101199807. The funding bodies had no role in the design of the study, the analysis of the data, or the writing of the manuscript.

\subsection*{Authors' contributions}
BU and HHH developed the statistical framework. BU implemented the reconstruction software and performed the simulations. BU and GG prepared the computational experiments and figures. MD, AP, SS, PM, ES, and JC contributed to the experimental data acquisition, curation, and interpretation. CMK contributed to the medical-imaging interpretation and manuscript revision. HHH supervised the project. BU and HHH drafted the manuscript. All authors critically revised the manuscript and approved the final version.

\subsection*{Acknowledgements}
Not applicable.

\subsection*{Additional files}
\textbf{Additional file 1: Additional\_file\_1.pdf}\\
\textbf{File format:} .pdf\\
\textbf{Title:} Supplementary material for ``A Conjugate Bayesian Framework for Fast 3D Positronium Lifetime Estimation with a Partial System Matrix''\\
\textbf{Description:} Contains additional mathematical derivations and proofs, supplementary figures, and extended discussion supporting the proposed framework.

\bibliography{references}

\newpage

\begin{center}
\appendix{\Large \textbf{SUPPLEMENTARY}}
\end{center}

\renewcommand{\thefigure}{S\arabic{figure}}
\renewcommand{\thetable}{S\arabic{table}}
\renewcommand{\theequation}{S\arabic{equation}}
\renewcommand{\thesection}{S\arabic{section}}

\section{Statistical properties of the posterior-weighted Gamma estimator}
\subsection{Latent voxel mixture model under the partial system matrix}

Bayesian model-based clustering represents unknown cluster labels through latent allocation variables and bases inference on posterior allocation probabilities or Bayes-optimal partitions \cite{melnykov_distribution_2013,rastelli_optimal_2018,aragam_identifiability_2020,ascolani_clustering_2023}. We use the same construction here, with voxels playing the role of latent clusters and retained triple events playing the role of observations.

For the theoretical results below, we work in an idealized fixed-design setting in which the partial system matrix is defined on a fixed finite channel set \(\mathcal C_+\) and the voxel domain is restricted to the positive-support subset
\begin{equation}
\mathcal V_+ = \{j : \tilde f_j > 0\},
\qquad
J = |\mathcal V_+| < \infty.
\end{equation}
Throughout this supplement, indices \(j=1,\dots,J\) refer only to voxels in \(\mathcal V_+\). Zero-probability voxels and the random growth of the observed channel set with sample size are outside the scope of the consistency statements below.

Throughout this supplement, the partial system matrix is the forward conditional distribution
\begin{equation}
H_{c,j}=P(C=c\mid Z=j,\text{retained}),
\qquad
\sum_{c\in\mathcal C_+}H_{c,j}=1.
\label{eq:supp_forwardH}
\end{equation}
Let
\begin{equation}
p_j=\frac{\tilde f_j}{\sum_{\ell=1}^{J}\tilde f_\ell}
\label{eq:supp_pj}
\end{equation}
be the retained-event probability that a randomly selected retained triple originated from voxel \(j\). Conditional on \(\tilde f\), we model the retained event \((C_k,\tau_k)\) through an unobserved source voxel \(Z_k\in\{1,\dots,J\}\):
\begin{align}
Z_k \mid \tilde f &\sim \mathrm{Categorical}(p_1,\dots,p_J), \label{eq:supp_latentZ}\\
C_k \mid Z_k=j,\tilde f &\sim H_{\cdot,j}, \label{eq:supp_CgivenZ}\\
\tau_k \mid Z_k=j,\lambda &\sim \mathrm{Exp}(\lambda_j), \qquad \tau_k\ge 0. \label{eq:supp_TgivenZ}
\end{align}
The channel and timing variables are conditionally independent given \(Z_k\) and \(\lambda\). Therefore the retained-data density is
\begin{equation}
p(c,\tau\mid p,\lambda)=
\sum_{j=1}^{J} p_j H_{c,j}\lambda_j e^{-\lambda_j\tau}.
\label{eq:supp_jointdensity}
\end{equation}

Define the unnormalized source score
\begin{equation}
w_{k,j}=H_{c_k,j}\tilde f_j,
\qquad
\hat y_{c_k}=\sum_{\ell=1}^{J}w_{k,\ell},
\label{eq:supp_unnormscore}
\end{equation}
and the normalized channel-posterior weight
\begin{equation}
\pi_{k,j}=\frac{w_{k,j}}{\hat y_{c_k}}.
\label{eq:supp_piweights}
\end{equation}
\begin{remark}[Marginal versus channel-conditioned voxel probabilities]\label{rem:supp_p_vs_pi}
Equation~\eqref{eq:supp_pj} defines the marginal source-voxel probability among retained events. By contrast, Eq.~\eqref{eq:supp_piweights} is the channel-conditioned posterior allocation probability. Therefore the factor \(H_{c,j}\) belongs in \(\pi_{k,j}=P(Z_k=j\mid C_k=c_k,\tilde f)\), not in the marginal mixture weight \(p_j\). If one instead parameterizes the model by pre-detection activity \(f_j\) and an unnormalized retained-event kernel \(G_{c,j}=P(C=c,\text{retained}\mid Z=j)\), then the retained-event source probability is proportional to \(s_j f_j\) with \(s_j=\sum_c G_{c,j}\), while the channel-conditioned posterior becomes
\[
P(Z=j\mid C=c,\text{retained},f)
=
\frac{G_{c,j}f_j}{\sum_{\ell}G_{c,\ell}f_\ell}
=
\frac{H_{c,j}\tilde f_j}{\sum_{\ell}H_{c,\ell}\tilde f_\ell}.
\]
Thus even in the pre-detection parameterization, the channel factor enters only after conditioning on \(C=c\).
\end{remark}
\begin{theorem}[Bayesian interpretation of the partial-matrix weight]\label{thm:supp_bayesweight}
For each retained event \(k\), the score \(w_{k,j}=H_{c_k,j}\tilde f_j\) is the unnormalized posterior mass that the event originated from voxel \(j\) given its observed detector-time channel, and the normalized weight \(\pi_{k,j}\) is the exact posterior probability
\begin{equation}
\pi_{k,j}=P(Z_k=j\mid C_k=c_k,\tilde f).
\end{equation}
Moreover, if \(N_{c,j}\) denotes the unobserved number of retained events from voxel \(j\) recorded in channel \(c\), then
\begin{equation}
E[N_{c,j}\mid Y,\tilde f]=Y_c\,\pi_{c,j},
\qquad
\pi_{c,j}=\frac{H_{c,j}\tilde f_j}{\sum_{\ell}H_{c,\ell}\tilde f_\ell}.
\label{eq:supp_latentcount_expectation}
\end{equation}
\end{theorem}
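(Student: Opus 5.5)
The first claim follows directly from Bayes' rule applied to the latent model \eqref{eq:supp_latentZ}--\eqref{eq:supp_CgivenZ}. I would write the joint mass of source voxel and channel as
\[
P(Z_k=j,\,C_k=c\mid\tilde f)=p_j H_{c,j}=\frac{\tilde f_j H_{c,j}}{\sum_{\ell}\tilde f_\ell}.
\]
Conditioning on \(C_k=c_k\) then gives
\[
P(Z_k=j\mid C_k=c_k,\tilde f)=\frac{p_jH_{c_k,j}}{\sum_\ell p_\ell H_{c_k,\ell}}.
\]
The common factor \(\sum_\ell\tilde f_\ell\) cancels, which leaves \(H_{c_k,j}\tilde f_j/\sum_\ell H_{c_k,\ell}\tilde f_\ell=w_{k,j}/\hat y_{c_k}=\pi_{k,j}\). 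So \(w_{k,j}\) is the joint mass up to a \(j\)-independent constant, i.e. the unnormalized posterior mass. I would also note, as in Remark~\ref{rem:supp_p_vs_pi}, that the marginal retained-channel probability \(\sum_\ell p_\ell H_{c,\ell}\) is strictly positive for \(c\in\mathcal C_+\). This holds whenever some positive-support voxel has \(H_{c,j}>0\), which I assume because the channel was observed under the model; it makes the conditioning well defined.

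For the latent-count identity, I would give the Poisson splitting (thinning) argument that matches the EM model \eqref{eq:poissonchannel}. Take \(N_{c,j}\sim\mathrm{Poisson}(H_{c,j}\tilde f_j)\), independent across \((c,j)\), with \(Y_c=\sum_j N_{c,j}\); then \(Y_c\sim\mathrm{Poisson}(\hat y_c)\). The standard fact that independent Poissons conditioned on their sum are multinomial gives
\[
(N_{c,1},\dots,N_{c,J})\mid Y_c\sim\mathrm{Multinomial}\bigl(Y_c;\pi_{c,1},\dots,\pi_{c,J}\bigr).
\]
Independence across channels means conditioning on the whole vector \(Y\) reduces to conditioning on \(Y_c\). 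Taking the multinomial mean yields \(E[N_{c,j}\mid Y,\tilde f]=Y_c\pi_{c,j}\).

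Alternatively, in the per-event picture, \(N_{c,j}=\sum_k I(C_k=c,Z_k=j)\). Given the observed channels, the events are conditionally independent, so linearity and the first part give
\[
E[N_{c,j}\mid Y]=\sum_{k:C_k=c}P(Z_k=j\mid C_k=c)=Y_c\pi_{c,j}.
\]
I would present both routes briefly, since the first matches the channel Poisson model and the second matches the list-mode categorical model.

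The only delicate step is being explicit about the conditional-independence structure: the labels are independent across events given the channels, and the counts are independent across channels. This is what lets \(Y\) be replaced by \(Y_c\) or by the individual \(C_k\). It is a modeling assumption rather than a computation, and I would state it clearly. The zero-denominator issue is handled by restricting to \(\mathcal V_+\) and \(\mathcal C_+\).
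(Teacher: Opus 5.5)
Your proposal is correct and matches the paper's proof: you apply Bayes' rule to the latent categorical/channel model and cancel the factor \(\sum_\ell\tilde f_\ell\). You then use Poisson splitting to get that \((N_{c,1},\dots,N_{c,J})\) given \(Y_c\) is multinomial with cell probabilities \(\pi_{c,j}\). Your extra remarks only make explicit what the paper leaves implicit: positivity of \(\hat y_c\) on \(\mathcal C_+\), independence across channels so that conditioning on \(Y\) reduces to conditioning on \(Y_c\), and the alternative per-event linearity route.
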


\begin{proof}
By Bayes' rule and Eq.~\eqref{eq:supp_pj},
\[
P(Z_k=j\mid C_k=c_k,\tilde f)
=
\frac{P(C_k=c_k\mid Z_k=j,\tilde f)P(Z_k=j\mid \tilde f)}
{\sum_{\ell}P(C_k=c_k\mid Z_k=\ell,\tilde f)P(Z_k=\ell\mid \tilde f)}.
\]
Using Eq.~\eqref{eq:supp_CgivenZ},
\[
P(Z_k=j\mid C_k=c_k,\tilde f)
=
\frac{H_{c_k,j}\tilde f_j}{\sum_{\ell}H_{c_k,\ell}\tilde f_\ell}
=
\frac{w_{k,j}}{\hat y_{c_k}}.
\]
Hence \(w_{k,j}\) is the unnormalized posterior mass and \(\pi_{k,j}\) is the posterior probability. For the latent counts, Poisson splitting implies that conditional on \(Y_c\), the vector \((N_{c,1},\dots,N_{c,N_{\mathrm{vox}}})\) is multinomial with total \(Y_c\) and cell probabilities \(\pi_{c,j}\). Taking expectations gives Eq.~\eqref{eq:supp_latentcount_expectation}.
\end{proof}

Because \(\hat y_{c_k}\) does not depend on \(j\), maximizing the event likelihood with \(w_{k,j}\) or with \(\pi_{k,j}\) gives the same exact responsibility after the timing model is included. The exact posterior responsibility for event \(k\) is
\begin{equation}
r_{k,j}(p,\lambda)
=
P(Z_k=j\mid C_k=c_k,\tau_k,p,\lambda)
=
\frac{\pi_{k,j}\lambda_j e^{-\lambda_j\tau_k}}
{\sum_{\ell=1}^{J}\pi_{k,\ell}\lambda_{\ell}e^{-\lambda_{\ell}\tau_k}}.
\label{eq:supp_resp}
\end{equation}
Equation~\eqref{eq:supp_resp} is the mixture-model clustering responsibility specialized to the present detector-time setting.

\subsection{Gamma conjugacy and estimator construction}

\begin{proposition}[Conjugate conditional posterior]\label{prop:supp_gamma}
Under the hierarchical model in Eqs.~\eqref{eq:supp_latentZ}--\eqref{eq:supp_TgivenZ} with independent Gamma priors \(\lambda_j\sim\mathrm{Gamma}(\alpha_0,\beta_0)\), the exact conditional posterior given the latent labels \(Z_1,\dots,Z_{N_{\mathrm{ev}}}\) factorizes as
\begin{equation}
\lambda_j\mid Z,\tau
\sim
\mathrm{Gamma}\!\left(\alpha_0+N_j,\ \beta_0+S_j\right),
\end{equation}
where \(N_j=\sum_k I(Z_k=j)\) and \(S_j=\sum_k I(Z_k=j)\tau_k\).
\end{proposition}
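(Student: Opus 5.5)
The plan is to write the full joint density of \((\lambda, Z, C, \tau)\) under the hierarchical model Eqs.~\eqref{eq:supp_latentZ}--\eqref{eq:supp_TgivenZ} together with the independent Gamma priors. From that density we isolate the factors that depend on \(\lambda\). We treat the events \(k=1,\dots,N_{\mathrm{ev}}\) as conditionally independent given \((Z,\lambda)\), as in the retained-event model, and \(\tilde f\) as fixed.

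\textbf{Factorizing the joint density.} The joint density factorizes as
\[
\prod_{j=1}^{J}\pi_0(\lambda_j)\prod_{k=1}^{N_{\mathrm{ev}}} p_{Z_k}\,H_{C_k,Z_k}\,\lambda_{Z_k}e^{-\lambda_{Z_k}\tau_k},
\]
where \(\pi_0(\lambda_j)\propto \lambda_j^{\alpha_0-1}e^{-\beta_0\lambda_j}\) is the prior density. The factors \(p_{Z_k}\) and \(H_{C_k,Z_k}\) do not involve \(\lambda\), because the channel distribution depends only on \(Z_k\) and \(\tilde f\). This conditional independence of \(C_k\) and \(\tau_k\) given \(Z_k\) is the only structural point to check. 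Consequently, conditioning additionally on \(C\) leaves the posterior of \(\lambda\) unchanged, which justifies writing \(\lambda_j\mid Z,\tau\).

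\textbf{Grouping events by voxel.} Next we group the likelihood product by voxel label. We write
\[
\prod_k \lambda_{Z_k}e^{-\lambda_{Z_k}\tau_k}=\prod_{j=1}^{J}\prod_{k:Z_k=j}\lambda_j e^{-\lambda_j\tau_k}=\prod_{j=1}^{J}\lambda_j^{N_j}e^{-\lambda_j S_j},
\]
with \(N_j=\sum_k I(Z_k=j)\) and \(S_j=\sum_k I(Z_k=j)\tau_k\). Multiplying by the prior gives
\[
p(\lambda\mid Z,\tau)\propto\prod_{j=1}^{J}\lambda_j^{\alpha_0+N_j-1}e^{-(\beta_0+S_j)\lambda_j}.
\]

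\textbf{Identifying the Gamma posterior.} The right-hand side is a product of functions of the individual \(\lambda_j\), so the posterior factorizes into independent components. Each component is the kernel of a Gamma distribution with shape \(\alpha_0+N_j\) and rate \(\beta_0+S_j\). Both parameters are positive: \(\alpha_0>0\), \(N_j\ge 0\), \(\beta_0>0\), and \(S_j\ge0\) because \(\tau_k\ge 0\). The kernel is therefore integrable on \((0,\infty)\), and normalizing it gives the stated Gamma law. Voxels with no assigned events, where \(N_j=0\), simply retain their prior, which is consistent with the formula.

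There is no real obstacle here. The only care needed is the bookkeeping that the channel terms \(H_{C_k,Z_k}\) and mixing weights \(p_{Z_k}\) drop out as \(\lambda\)-free constants, and that the rate stays positive so the kernel normalizes.
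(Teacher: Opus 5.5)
Your proposal is correct and follows essentially the same route as the paper's proof: group the complete-data exponential likelihood by voxel into \(\lambda_j^{N_j}e^{-\lambda_j S_j}\), multiply by the Gamma prior to get the Gamma kernel, and read off independence across \(j\) from the factorization. Your extra checks, that the \(p_{Z_k}\) and \(H_{C_k,Z_k}\) factors are \(\lambda\)-free and that the shape and rate stay positive, are harmless additions the paper leaves implicit.
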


\begin{proof}
The complete-data likelihood contributed by voxel \(j\) is
\[
\prod_{k:Z_k=j}\lambda_j e^{-\lambda_j\tau_k}
=
\lambda_j^{N_j}\exp(-\lambda_j S_j).
\]
Multiplying by the Gamma prior density gives
\[
p(\lambda_j\mid Z,\tau)\propto
\lambda_j^{\alpha_0+N_j-1}
\exp\!\left[-(\beta_0+S_j)\lambda_j\right],
\]
which is the stated Gamma distribution. Independence across \(j\) follows from the factorization of the prior and complete-data likelihood.
\end{proof}

\subsection{Theoretical guarantees}

The next theorem adapts Bayesian mixture-model clustering consistency to the finite-voxel retained-event model in Eq.~\eqref{eq:supp_jointdensity}. Related Bayes-optimal clustering and clustering-consistency results have been established in much broader mixture settings \cite{aragam_identifiability_2020,ascolani_clustering_2023}. Because the present model is finite-dimensional, we can give a direct proof tailored to the partial-system-matrix formulation.

For \(\varepsilon_p\in(0,1/J)\), define the compact simplex interior
\begin{equation}
\Delta_{\varepsilon_p}^{J-1}=
\left\{p\in[\varepsilon_p,1]^J:\ \sum_{j=1}^{J}p_j=1\right\},
\end{equation}
and let
\begin{equation}
\Theta=\Delta_{\varepsilon_p}^{J-1}\times[\underline\lambda,\overline\lambda]^J,
\qquad 0<\underline\lambda<\overline\lambda<\infty.
\end{equation}
For \(\theta=(p,\lambda)\in\Theta\), write
\begin{equation}
p_\theta(c,\tau)=\sum_{j=1}^{J}p_jH_{c,j}\lambda_j e^{-\lambda_j\tau},
\qquad c\in\mathcal C_+,\ \tau\ge 0.
\label{eq:supp_theta_density}
\end{equation}
The exact likelihood is
\begin{equation}
L_n(\theta)=\prod_{k=1}^{n}p_\theta(C_k,\tau_k).
\end{equation}

\begin{theorem}[Posterior consistency for mixed and unknown source voxels]\label{thm:supp_consistency}
Let \(J=|\mathcal V_+|<\infty\). Assume that the retained events \(\{(C_k,\tau_k)\}_{k=1}^{n}\) are independent and identically distributed from \(p_{\theta^0}\) for some interior point \(\theta^0=(p^0,\lambda^0)\in\Theta\). Suppose:
\begin{enumerate}
\item[(A1)] the map \(\theta\mapsto p_\theta\) is identifiable on \(\Theta\);
\item[(A2)] the prior on \(\Theta\) has a density that is continuous and strictly positive in a neighborhood of \(\theta^0\).
\end{enumerate}
Then for every \(\varepsilon>0\),
\begin{equation}
\Pi_n\!\left(\|\theta-\theta^0\|_\infty>\varepsilon\mid \mathcal D_n\right)\to 0
\end{equation}
in \(P_{\theta^0}\)-probability as \(n\to\infty\).
\end{theorem}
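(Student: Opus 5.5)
We prove posterior consistency through Schwartz's theorem in the compact finite-dimensional form. The argument has three ingredients. First, the prior must charge Kullback--Leibler neighborhoods of \(p_{\theta^0}\). Second, there must exist uniformly consistent tests of \(\theta^0\) against the complement \(U_\varepsilon^c=\{\theta\in\Theta:\|\theta-\theta^0\|_\infty>\varepsilon\}\). Third, these two facts combine in the standard ratio argument: the posterior mass of \(U_\varepsilon^c\) is a numerator integral over the likelihood ratio \(L_n(\theta)/L_n(\theta^0)\), divided by a denominator integral of the same ratio. The denominator is bounded below by \(e^{-n\delta}\) eventually, almost surely, using the KL condition. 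The numerator is bounded above by \(e^{-n\beta}\) with high probability, using the tests. Taking \(\delta<\beta\) gives the result.

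\textbf{KL support.} On \(\Theta\), the map \(\theta\mapsto \log p_\theta(c,\tau)\) is continuous for each \((c,\tau)\), and we dominate it. Every component is positive, since \(p_j\ge\varepsilon_p\) and \(\lambda_j\ge\underline\lambda\). Restricting to \(c\) in the support of \(p_{\theta^0}\), we have \(\min_j H_{c,j}>0\) for at least one \(j\) with \(H_{c,j}>0\). Hence
\[
\underline\lambda\,\varepsilon_p\, h_c\, e^{-\overline\lambda\tau}\le p_\theta(c,\tau)\le \overline\lambda\, e^{-\underline\lambda\tau},\qquad h_c=\max_j H_{c,j}\,\text{over a fixed}\ j\ \text{with}\ H_{c,j}>0,
\]
so \(|\log p_\theta(c,\tau)|\le A_c+B\tau\). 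One subtlety needs care: a channel \(c\) may have \(H_{c,j}>0\) only for some \(j\). We handle this by noting that \(p_\theta(c,\cdot)\) and \(p_{\theta^0}(c,\cdot)\) have the same support in \(c\), namely \(\{c:\sum_j H_{c,j}>0\}\), because all \(p_j>0\). Since \(\tau\) has finite mean under \(p_{\theta^0}\), dominated convergence gives \(K(\theta^0,\theta)=E_{\theta^0}\log(p_{\theta^0}/p_\theta)\to0\) as \(\theta\to\theta^0\). The same holds for the second moment of the log ratio. Assumption (A2) then shows that every KL neighborhood has positive prior mass.

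\textbf{Tests.} Because \(\Theta\) is compact and the model is parametric with a uniformly integrable envelope, we construct the tests from uniform convergence of the log-likelihood. By the envelope above and continuity in \(\theta\), the class \(\{\log p_\theta:\theta\in\Theta\}\) is Glivenko--Cantelli: it is a compact parameter set with an integrable envelope \(A_c+B\tau\), so a bracketing argument applies. Thus
\[
\sup_\theta \bigl|n^{-1}\log L_n(\theta)-E_{\theta^0}\log p_\theta\bigr|\to0
\]
almost surely. Identifiability (A1), together with continuity and compactness, yields \(\beta:=\inf_{\theta\in U_\varepsilon^c}K(\theta^0,\theta)>0\). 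This is the main point where (A1) is used: \(K(\theta^0,\theta)>0\) for \(\theta\neq\theta^0\) by identifiability, \(K\) is continuous, and \(U_\varepsilon^c\) is compact.

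\textbf{Combining.} On \(U_\varepsilon^c\), uniform convergence gives \(\sup_\theta L_n(\theta)/L_n(\theta^0)\le e^{-n\beta/2}\) eventually, almost surely. This bounds the numerator of the posterior ratio directly, so explicit tests become unnecessary. Meanwhile, the denominator is at least \(e^{-n\beta/4}\) eventually, by the KL support step and Fatou's lemma or the strong law, following the standard Schwartz argument. The posterior mass of \(U_\varepsilon^c\) is therefore at most \(e^{-n\beta/4}\to0\) almost surely, which implies convergence in probability.

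The main obstacle is the uniform law of large numbers. It requires the lower envelope on \(p_\theta\), which is where the compact constraints \(p_j\ge\varepsilon_p\) and \(\lambda\in[\underline\lambda,\overline\lambda]\) are essential, and also the care with channels where some \(H_{c,j}\) vanish.
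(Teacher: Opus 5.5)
Your proposal is correct and is essentially the paper's argument: the same envelope $|\log p_\theta(c,\tau)|\le A+B\tau$ from the compact constraints, the same Glivenko--Cantelli uniform law, the same KL separation from (A1) and compactness, and the same numerator/denominator comparison of posterior mass (one small fix: take the infimum over the closed set $\{\|\theta-\theta^0\|_\infty\ge\varepsilon\}$, since the strict-inequality set is not compact). The only variation is that you lower-bound the denominator with Schwartz's KL-support lemma (Jensen or Fatou plus the strong law), whereas the paper reuses the uniform law on a small ball $U_\varepsilon$ with $\Pi(U_\varepsilon)>0$; either works here.
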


\begin{proof}
For \(\theta=(p,\lambda)\in\Theta\), define
\[
m_\theta(c,\tau)=\log p_\theta(c,\tau).
\]
Since \(\mathcal C_+\) is finite, for each \(c\in\mathcal C_+\) there exists \(h_c=\max_j H_{c,j}>0\). For any \(\theta\in\Theta\),
\[
\varepsilon_p\underline\lambda h_c e^{-\overline\lambda\tau}
\le p_\theta(c,\tau)
\le J\overline\lambda e^{-\underline\lambda\tau}.
\]
Hence there exist constants \(A,B<\infty\) such that
\[
|m_\theta(c,\tau)|\le A+B\tau
\qquad\text{for all }\theta\in\Theta,\ c\in\mathcal C_+,\ \tau\ge 0.
\]
Under \(p_{\theta^0}\), \(E_{\theta^0}[\tau]<\infty\), so \(A+B\tau\) is integrable. Because \(m_\theta(c,\tau)\) is continuous in \(\theta\) and \(\Theta\) is compact, the class \(\{m_\theta:\theta\in\Theta\}\) is Glivenko-Cantelli. Therefore
\begin{equation}
\sup_{\theta\in\Theta}
\left|
\frac1n\sum_{k=1}^{n}m_\theta(C_k,\tau_k)-M(\theta)
\right|\to 0
\label{eq:supp_uniform_lln_joint}
\end{equation}
almost surely, where \(M(\theta)=E_{\theta^0}[m_\theta(C,\tau)]\).

Now
\[
M(\theta)-M(\theta^0)
=
E_{\theta^0}\!\left[\log\frac{p_\theta(C,\tau)}{p_{\theta^0}(C,\tau)}\right]
=-\mathrm{KL}(p_{\theta^0}\,\|\,p_\theta)\le 0.
\]
By identifiability (A1), equality holds only at \(\theta=\theta^0\). Since \(M\) is continuous on compact \(\Theta\), for every \(\varepsilon>0\) there exists \(\delta_\varepsilon>0\) such that
\begin{equation}
\sup_{\|\theta-\theta^0\|_\infty>\varepsilon}M(\theta)
\le M(\theta^0)-2\delta_\varepsilon.
\label{eq:supp_joint_sep}
\end{equation}
Choose an open ball \(U_\varepsilon\subset\Theta\) centered at \(\theta^0\) with radius \(\varepsilon/2\) such that
\begin{equation}
\inf_{\theta\in U_\varepsilon}M(\theta)
\ge M(\theta^0)-\delta_\varepsilon/2.
\label{eq:supp_joint_neigh}
\end{equation}
By (A2), \(\Pi(U_\varepsilon)>0\).

On the event where Eq.~\eqref{eq:supp_uniform_lln_joint} is smaller than \(\delta_\varepsilon/2\), Eqs.~\eqref{eq:supp_joint_sep} and \eqref{eq:supp_joint_neigh} imply
\[
\sup_{\|\theta-\theta^0\|_\infty>\varepsilon}\log L_n(\theta)
\le n\{M(\theta^0)-3\delta_\varepsilon/2\},
\]
whereas for every \(\theta\in U_\varepsilon\),
\[
\log L_n(\theta)
\ge n\{M(\theta^0)-\delta_\varepsilon\}.
\]
Therefore
\[
\Pi_n(\|\theta-\theta^0\|_\infty>\varepsilon\mid\mathcal D_n)
\le
\frac{\exp(-n\delta_\varepsilon/2)\Pi(\Theta)}{\Pi(U_\varepsilon)}
\to 0.
\]
This proves posterior consistency.
\end{proof}

\begin{corollary}[Consistency of posterior voxel-allocation probabilities]\label{cor:supp_alloc_consistency}
Under the assumptions of Theorem~\ref{thm:supp_consistency}, for every \(\varepsilon>0\),
\begin{equation}
\Pi_n\!\left(
\max_{c\in\mathcal C_+}\max_{1\le j\le J}
\left|\pi_j(c;\theta)-\pi_j(c;\theta^0)\right|>\varepsilon
\,\mid\,\mathcal D_n
\right)\to 0,
\end{equation}
where
\begin{equation}
\pi_j(c;\theta)=\frac{p_jH_{c,j}}{\sum_{\ell=1}^{J}p_\ell H_{c,\ell}}.
\end{equation}
\end{corollary}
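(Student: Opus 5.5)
The plan is to derive the corollary from Theorem~\ref{thm:supp_consistency} by a continuous-mapping argument. The map \(\theta\mapsto\pi_j(c;\theta)\) depends only on the \(p\)-component of \(\theta\). It is uniformly continuous on the compact set \(\Theta\), with a modulus that does not depend on \(c\) or \(j\). Given such a modulus, the event in the corollary is contained in an event of the form \(\{\|\theta-\theta^0\|_\infty>\delta\}\). The theorem already sends the posterior probability of that event to zero in \(P_{\theta^0}\)-probability.

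\textbf{Step 1: the denominator is bounded away from zero.} For each \(c\in\mathcal C_+\), set \(h_c=\max_j H_{c,j}>0\). Positivity holds because the channel is observed, so some positive-support voxel has \(H_{c,j}>0\); the same fact was used in the proof of Theorem~\ref{thm:supp_consistency}. Since every \(p\in\Delta^{J-1}_{\varepsilon_p}\) has \(p_\ell\ge\varepsilon_p\), we get
\[
D_c(p)=\sum_\ell p_\ell H_{c,\ell}\ \ge\ \varepsilon_p h_c,
\]
and \(D_c(p)\le h_c\). Because \(\mathcal C_+\) is finite, \(h_{\min}=\min_c h_c>0\).

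\textbf{Step 2: a uniform Lipschitz bound.} For \(p,p'\) in the simplex, write
\[
\pi_j(c;p)-\pi_j(c;p')=\frac{(p_j-p'_j)H_{c,j}}{D_c(p)}+p'_jH_{c,j}\,\frac{D_c(p')-D_c(p)}{D_c(p)D_c(p')}.
\]
Use \(|D_c(p)-D_c(p')|\le J h_c\|p-p'\|_\infty\) together with \(H_{c,j}\le h_c\) and the lower bound from Step 1. This gives
\[
|\pi_j(c;p)-\pi_j(c;p')|\le K\|p-p'\|_\infty,
\qquad K=\frac{1}{\varepsilon_p}+\frac{J}{\varepsilon_p^2},
\]
uniformly in \(c\) and \(j\). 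This step is the only place needing care: the naive bound depends on \(h_c\), and the point is that \(h_c\) cancels between the numerator and denominator bounds. Even without the cancellation, finiteness of \(\mathcal C_+\) would still yield a finite maximal constant.

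\textbf{Step 3: inclusion and conclusion.} Fix \(\varepsilon>0\) and put \(\delta=\varepsilon/K\). By Step 2,
\[
\Bigl\{\max_c\max_j|\pi_j(c;\theta)-\pi_j(c;\theta^0)|>\varepsilon\Bigr\}\subseteq\{\|p-p^0\|_\infty>\delta\}\subseteq\{\|\theta-\theta^0\|_\infty>\delta\}.
\]
By monotonicity of the posterior measure \(\Pi_n(\cdot\mid\mathcal D_n)\), the posterior probability of the left-hand event is at most \(\Pi_n(\|\theta-\theta^0\|_\infty>\delta\mid\mathcal D_n)\). Theorem~\ref{thm:supp_consistency}, applied with \(\delta\) in place of \(\varepsilon\), sends this bound to \(0\) in \(P_{\theta^0}\)-probability, which proves the corollary. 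The main obstacle is Step 2. It is elementary, but it relies essentially on the compact-interior restriction \(p_j\ge\varepsilon_p\) and on the finiteness of \(\mathcal C_+\).
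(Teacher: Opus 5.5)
Your proposal is correct and follows the paper's route: you bound the denominator \(\sum_\ell p_\ell H_{c,\ell}\) away from zero on \(\Delta_{\varepsilon_p}^{J-1}\), get a modulus of continuity for \(\theta\mapsto\pi_j(c;\theta)\) that is uniform in \((c,j)\), place the event inside \(\{\|\theta-\theta^0\|_\infty>\delta\}\), and invoke Theorem~\ref{thm:supp_consistency}. The only difference is that you compute an explicit Lipschitz constant \(K=1/\varepsilon_p+J/\varepsilon_p^2\), which I checked and which does not depend on \(h_c\) or \(|\mathcal C_+|\); the paper instead cites uniform continuity on the compact set \(\mathcal C_+\times\Theta\), so your version is a harmless quantitative refinement of the same argument.
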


\begin{proof}
For each observed channel \(c\in\mathcal C_+\), the denominator \(\sum_\ell p_\ell H_{c,\ell}\) is strictly positive on \(\Delta_{\varepsilon_p}^{J-1}\). Hence \((c,\theta)\mapsto \pi_j(c;\theta)\) is continuous on the compact set \(\mathcal C_+\times\Theta\). Uniform continuity then implies that for every \(\varepsilon>0\) there exists \(\eta>0\) such that \(\|\theta-\theta^0\|_\infty<\eta\) implies
\[
\max_{c\in\mathcal C_+}\max_j |\pi_j(c;\theta)-\pi_j(c;\theta^0)|<\varepsilon.
\]
The claim follows from Theorem~\ref{thm:supp_consistency}.
\end{proof}

\begin{corollary}[Plug-in posterior consistency for the two-stage estimator]\label{cor:supp_plugin_consistency}
Assume the conditions of Theorem~\ref{thm:supp_consistency} hold, except that \(p^0\) is not given a prior and is replaced by an estimator \(\hat p_n\) satisfying \(\hat p_n\to p^0\) in \(P_{\theta^0}\)-probability. Let \(\Pi_n^{\mathrm{plug}}(\cdot\mid\mathcal D_n,\hat p_n)\) be the posterior on \(\lambda\in[\underline\lambda,\overline\lambda]^J\) based on
\begin{equation}
L_n^{\mathrm{plug}}(\lambda;\hat p_n)
=
\prod_{k=1}^{n}\left[\sum_{j=1}^{J}\hat p_{n,j}H_{C_k,j}\lambda_j e^{-\lambda_j\tau_k}\right].
\label{eq:supp_plugin_like}
\end{equation}
If the prior on \([\underline\lambda,\overline\lambda]^J\) is continuous and strictly positive near \(\lambda^0\), then for every \(\varepsilon>0\),
\begin{equation}
\Pi_n^{\mathrm{plug}}\!\left(\|\lambda-\lambda^0\|_\infty>\varepsilon\mid \mathcal D_n,\hat p_n\right)\to 0
\end{equation}
in \(P_{\theta^0}\)-probability.
\end{corollary}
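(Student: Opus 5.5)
I will reproduce the argument of Theorem~\ref{thm:supp_consistency} with the parameter space reduced to $\Lambda=[\underline\lambda,\overline\lambda]^J$. The only new feature is that the criterion function now depends on the random plug-in $\hat p_n$, and the extra work is controlling that dependence uniformly in $\lambda$.

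Write $m(\lambda,p;c,\tau)=\log\sum_j p_jH_{c,j}\lambda_je^{-\lambda_j\tau}$ on the enlarged compact set $\Delta^{J-1}_{\varepsilon_p/2}\times\Lambda$. With probability tending to one, $\hat p_n$ lies in $\Delta^{J-1}_{\varepsilon_p/2}$, since $p^0$ is interior to $\Delta^{J-1}_{\varepsilon_p}$; I will work on this event.

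\textbf{Step 1 (uniform law of large numbers).} The envelope bound from the proof of Theorem~\ref{thm:supp_consistency} holds verbatim on the enlarged set: $|m|\le A+B\tau$. The Glivenko--Cantelli argument then gives
\[
\sup_{(p,\lambda)}\Bigl|\tfrac1n\textstyle\sum_k m(\lambda,p;C_k,\tau_k)-M(p,\lambda)\Bigr|\to0 \quad\text{a.s.},
\]
with $M(p,\lambda)=E_{\theta^0}m$. By dominated convergence with the same envelope, $M$ is continuous, hence uniformly continuous on the compact set. Together these give
\[
\sup_{\lambda\in\Lambda}\Bigl|\tfrac1n\log L_n^{\mathrm{plug}}(\lambda;\hat p_n)-M(p^0,\lambda)\Bigr|\le \sup_{p,\lambda}|\cdots|+\sup_\lambda|M(\hat p_n,\lambda)-M(p^0,\lambda)|\to0
\]
in probability, using $\hat p_n\to p^0$.

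\textbf{Step 2 (separation).} Let $M_0(\lambda)=M(p^0,\lambda)$. Then $M_0(\lambda)-M_0(\lambda^0)=-\mathrm{KL}(p_{\theta^0}\|p_{(p^0,\lambda)})\le0$. By (A1) applied to pairs $(p^0,\lambda)$ and $(p^0,\lambda^0)$, which both lie in $\Theta$, equality forces $\lambda=\lambda^0$. Continuity and compactness then give $\delta_\varepsilon>0$ with
\[
\sup_{\|\lambda-\lambda^0\|_\infty>\varepsilon}M_0\le M_0(\lambda^0)-2\delta_\varepsilon,
\]
and a neighborhood $U_\varepsilon$ of $\lambda^0$ on which $M_0\ge M_0(\lambda^0)-\delta_\varepsilon/2$. $U_\varepsilon$ has positive prior mass by the positivity assumption.

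\textbf{Step 3 (posterior ratio).} On the event where the uniform deviation in Step 1 is below $\delta_\varepsilon/2$, whose probability tends to one, the same ratio bound as before holds:
\[
\Pi_n^{\mathrm{plug}}(\|\lambda-\lambda^0\|_\infty>\varepsilon)\le e^{-n\delta_\varepsilon/2}/\Pi(U_\varepsilon).
\]
Convergence in probability follows.

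The main obstacle is Step 1: the plug-in error must be absorbed uniformly in $\lambda$ and must not merely hold pointwise. I handle it by proving the Glivenko--Cantelli result jointly over $(p,\lambda)$ and then invoking uniform continuity of $M$ in $p$. If Glivenko--Cantelli over the enlarged class needs justification, I would cite the standard compact-parameter result: continuity in the parameter plus an integrable envelope suffices. A minor point to check is that the envelope lower bound requires $p_j\ge\varepsilon_p/2>0$, which is exactly why I restrict to the high-probability event above.
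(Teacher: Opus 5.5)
Your proposal is correct and follows essentially the paper's argument: a uniform law of large numbers jointly over $(p,\lambda)$ on a compact $p$-set that contains $\hat p_n$ with probability tending to one, then KL separation via (A1) at fixed $p^0$, then the neighborhood/prior-mass ratio bound from Theorem~\ref{thm:supp_consistency}. The only differences are minor. You enlarge to $\Delta^{J-1}_{\varepsilon_p/2}$ where the paper uses a compact neighborhood $U$ of $p^0$ inside $\Delta^{J-1}_{\varepsilon_p}$. You also invoke uniform continuity of $M$ explicitly to get $\sup_\lambda|M(\hat p_n,\lambda)-M(p^0,\lambda)|\to 0$, a step the paper's proof passes over with ``and therefore.''
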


\begin{proof}
Define
\[
m_{p,\lambda}(c,\tau)=\log\left(\sum_{j=1}^{J}p_jH_{c,j}\lambda_j e^{-\lambda_j\tau}\right).
\]
By the same envelope argument used in the proof of Theorem~\ref{thm:supp_consistency},
\[
\sup_{p\in U,\lambda\in[\underline\lambda,\overline\lambda]^J}
\left|\frac1n\sum_{k=1}^{n}m_{p,\lambda}(C_k,\tau_k)-E_{\theta^0}[m_{p,\lambda}(C,\tau)]\right|\to 0
\]
for every compact neighborhood \(U\) of \(p^0\) inside \(\Delta_{\varepsilon_p}^{J-1}\). Since \(\hat p_n\to p^0\) in probability, with probability tending to one \(\hat p_n\in U\), and therefore
\[
\sup_{\lambda}
\left|\frac1n\log L_n^{\mathrm{plug}}(\lambda;\hat p_n)-M_{\mathrm{plug}}(\lambda)\right|\to 0
\]
in probability, where
\[
M_{\mathrm{plug}}(\lambda)=E_{\theta^0}[m_{p^0,\lambda}(C,\tau)].
\]
The function \(M_{\mathrm{plug}}\) is uniquely maximized at \(\lambda^0\) by the same KL-divergence argument as in Theorem~\ref{thm:supp_consistency}. Repeating the neighborhood comparison and prior-mass argument from that proof yields the stated result.
\end{proof}

\begin{corollary}[Consistency of the weighted-Gamma surrogate under asymptotically correct clustering]\label{cor:supp_eb}
Suppose the assumptions of Corollary~\ref{cor:supp_plugin_consistency} hold and let \(q_{k,j}\in[0,1]\) satisfy \(\sum_j q_{k,j}=1\) and
\begin{equation}
\frac1n\sum_{k=1}^{n}|q_{k,j}-I(Z_k=j)|(1+\tau_k)\to 0
\label{eq:supp_q_assumption}
\end{equation}
in \(P_{\theta^0}\)-probability for each \(j\). Define
\(
N_{j,n}^q=\sum_k q_{k,j}
\)
and
\(
S_{j,n}^q=\sum_k q_{k,j}\tau_k
\),
and form the Gamma approximation
\begin{equation}
\lambda_j\mid\mathcal D_n
\approx
\mathrm{Gamma}(\alpha_0+N_{j,n}^q,\beta_0+S_{j,n}^q).
\label{eq:supp_gammaapprox}
\end{equation}
Then
\begin{equation}
\frac{\alpha_0+N_{j,n}^q}{\beta_0+S_{j,n}^q}\to \lambda_j^0
\end{equation}
in probability, and the approximate posterior variance is \(O_p(n^{-1})\). In particular, Eq.~\eqref{eq:supp_gammaapprox} is consistent whenever the channel-posterior weights \(\pi_{k,j}\) satisfy Eq.~\eqref{eq:supp_q_assumption}; a sufficient condition is asymptotic channel separation, that is, \(P(Z_k=j\mid C_k)\to I(Z_k=j)\) in average probability.
\end{corollary}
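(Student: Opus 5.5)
The plan is to compare the weighted statistics with their hard-label counterparts and then apply the law of large numbers to the latter. Define the oracle statistics \(N_{j,n}=\sum_k I(Z_k=j)\) and \(S_{j,n}=\sum_k I(Z_k=j)\tau_k\), which are the sufficient statistics appearing in Proposition~\ref{prop:supp_gamma}. Since \(|q_{k,j}-I(Z_k=j)|\le |q_{k,j}-I(Z_k=j)|(1+\tau_k)\) and \(\tau_k\ge 0\), assumption \eqref{eq:supp_q_assumption} gives
\[
\frac1n\bigl|N^q_{j,n}-N_{j,n}\bigr|\to 0,
\qquad
\frac1n\bigl|S^q_{j,n}-S_{j,n}\bigr|\le\frac1n\sum_{k}|q_{k,j}-I(Z_k=j)|\,\tau_k\to 0
\]
in probability. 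Both bounds follow from the triangle inequality. This comparison is why the factor \((1+\tau_k)\) appears in the assumption.

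Next, under the idealized model \eqref{eq:supp_latentZ}--\eqref{eq:supp_TgivenZ} with \(\theta^0\), the pairs \((Z_k,\tau_k)\) are i.i.d. The strong law therefore gives
\[
\frac{N_{j,n}}{n}\to p^0_j,
\qquad
\frac{S_{j,n}}{n}\to E\bigl[I(Z=j)\tau\bigr]=\frac{p^0_j}{\lambda^0_j}.
\]
Here \(p^0_j\ge\varepsilon_p>0\), so the limiting denominator is strictly positive. Combining this with the first step, \(N^q_{j,n}/n\to p^0_j\) and \(S^q_{j,n}/n\to p^0_j/\lambda^0_j\) in probability. The prior terms satisfy \(\alpha_0/n,\beta_0/n\to 0\). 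Writing
\[
\frac{\alpha_0+N^q_{j,n}}{\beta_0+S^q_{j,n}}=\frac{(\alpha_0+N^q_{j,n})/n}{(\beta_0+S^q_{j,n})/n},
\]
the continuous mapping theorem, applied to the map \((x,y)\mapsto x/y\) at a point with \(y>0\), yields convergence to \(\lambda^0_j\).

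For the variance, the Gamma\((a,b)\) variance is \(a/b^2\). Taking \(a=\alpha_0+N^q_{j,n}\) and \(b=\beta_0+S^q_{j,n}\),
\[
\frac{a}{b^2}=\frac1n\cdot\frac{a/n}{(b/n)^2}.
\]
The second factor converges in probability to \(p^0_j/(p^0_j/\lambda^0_j)^2=(\lambda^0_j)^2/p^0_j\), so it is \(O_p(1)\). Hence the variance is \(O_p(n^{-1})\).

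The statement about \(\pi_{k,j}\) then follows by taking \(q_{k,j}=\pi_{k,j}\), which is a valid choice since \(\pi_{k,j}\in[0,1]\) and \(\sum_j\pi_{k,j}=1\). For the sufficient condition of channel separation, I would argue as follows. First, \(\pi_{k,j}\) is computed with \(\hat p_n\) in place of \(p^0\), and by Corollary~\ref{cor:supp_alloc_consistency}-type uniform continuity the difference \(|\pi_j(c;\hat p_n)-\pi_j(c;p^0)|\) is uniformly small, because \(\mathcal C_+\) is finite. Second, the remaining term \(\frac1n\sum_k|\pi_j(C_k;p^0)-I(Z_k=j)|(1+\tau_k)\) should be controlled by Cauchy--Schwarz, using \(E\tau^2<\infty\) and the fact that \(|\pi-I|\le1\).

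I expect this last step to be the main obstacle. The notion of "asymptotic channel separation" is not precisely defined for a fixed finite channel set. With a fixed \(H\), the quantity \(\pi_j(c;p^0)\) does not tend to an indicator, so the condition must be read as a triangular-array or limiting assumption under which \(E|\pi_j(C;p^0)-I(Z=j)|\to0\). I would therefore state it as an assumption on that expectation. Under that assumption, Cauchy--Schwarz with the bounded second moment of \(\tau\) gives \eqref{eq:supp_q_assumption}, and the earlier steps do not depend on this interpretation.
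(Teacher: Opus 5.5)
Your proof is correct and follows the paper's argument essentially step for step. You compare $N^q_{j,n},S^q_{j,n}$ with the hard-label statistics via the $(1+\tau_k)$-weighted assumption, apply the law of large numbers, use $p^0_j>0$ to pass to the ratio, and read off the $O_p(n^{-1})$ variance from $a/b^2$. The paper gives no proof of the channel-separation sufficient condition, so your remark that it needs a triangular-array formalization such as $E|\pi_j(C;p^0)-I(Z=j)|\to0$ (after which Cauchy--Schwarz plus a Markov bound suffices) adds to the paper rather than contradicting it.
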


\begin{proof}
Let \(N_{j,n}=\sum_k I(Z_k=j)\) and \(S_{j,n}=\sum_k I(Z_k=j)\tau_k\). By Eq.~\eqref{eq:supp_q_assumption},
\[
\frac{|N_{j,n}^q-N_{j,n}|}{n}
\le
\frac1n\sum_k |q_{k,j}-I(Z_k=j)|\to 0,
\]
and
\[
\frac{|S_{j,n}^q-S_{j,n}|}{n}
\le
\frac1n\sum_k |q_{k,j}-I(Z_k=j)|\tau_k\to 0
\]
in probability. The law of large numbers under the true hierarchical model gives
\[
\frac{N_{j,n}}{n}\to p_j^0,
\qquad
\frac{S_{j,n}}{n}\to \frac{p_j^0}{\lambda_j^0},
\]
so
\[
\frac{N_{j,n}^q}{n}\to p_j^0,
\qquad
\frac{S_{j,n}^q}{n}\to \frac{p_j^0}{\lambda_j^0}.
\]
Because \(p_j^0>0\),
\[
\frac{\alpha_0+N_{j,n}^q}{\beta_0+S_{j,n}^q}
=
\frac{N_{j,n}^q/n+\alpha_0/n}{S_{j,n}^q/n+\beta_0/n}
\to \lambda_j^0.
\]
The Gamma variance equals
\[
\frac{\alpha_0+N_{j,n}^q}{(\beta_0+S_{j,n}^q)^2}=O_p(n^{-1}),
\]
so the approximate posterior concentrates at \(\lambda_j^0\).
\end{proof}
\section{Estimation of total decays from detected triple coincidences}

The main reconstruction framework estimates $\tilde f_j$, the expected number of detected
triple coincidences from voxel $j$. In some settings it is useful to approximate the total
number of decays (including undetected events) from $\tilde f$ using geometric arguments.

As a simplifying approximation, treat each voxel as an isotropic photon emitter within an
open-ended cylindrical detector. When the object is centered in the scanner and not adjacent
to the axial openings, the probability that a single emitted photon reaches the curved wall
(rather than exiting through an open end) can be approximated from the solid angle subtended
by the two circular openings. Let $p_{\text{wall},j}$ denote this wall-intersection probability
for a voxel at location $j$.

Assuming that the three photons in a triple coincidence have approximately independent
geometric acceptance, the probability of detecting all three is $p_{\text{wall},j}^3$. Let
$N_j$ denote the expected total number of decays in voxel $j$ during the acquisition. Then
\begin{equation}
\tilde f_j \approx p_{\text{wall},j}^3\,N_j,
\qquad
N_j \approx \frac{\tilde f_j}{p_{\text{wall},j}^3}.
\end{equation}
If a decay rate is desired, divide $N_j$ by the acquisition duration.

To compute $p_{\text{wall},j}$, treat each voxel as an isotropic source. If a surface subtends
solid angle $\Omega$ at the source, the fraction of emission directed toward it is $\Omega/4\pi$.
For an open cylinder, the fraction escaping through the openings is
\begin{equation}
p_{\text{exit},j} = \frac{\Omega_{\text{top},j} + \Omega_{\text{bottom},j}}{4\pi},
\qquad
p_{\text{wall},j}=1-p_{\text{exit},j}.
\end{equation}
The solid angles $\Omega_{\text{top},j}$ and $\Omega_{\text{bottom},j}$ follow from surface
integration over the circular openings.

The resulting geometric correction and the corresponding error analysis are illustrated in
Figures~\ref{fig:f_map_real_est} and~\ref{fig:f_real_compare}.

\begin{figure}[htbp]
\centering{\includegraphics[width=4in]{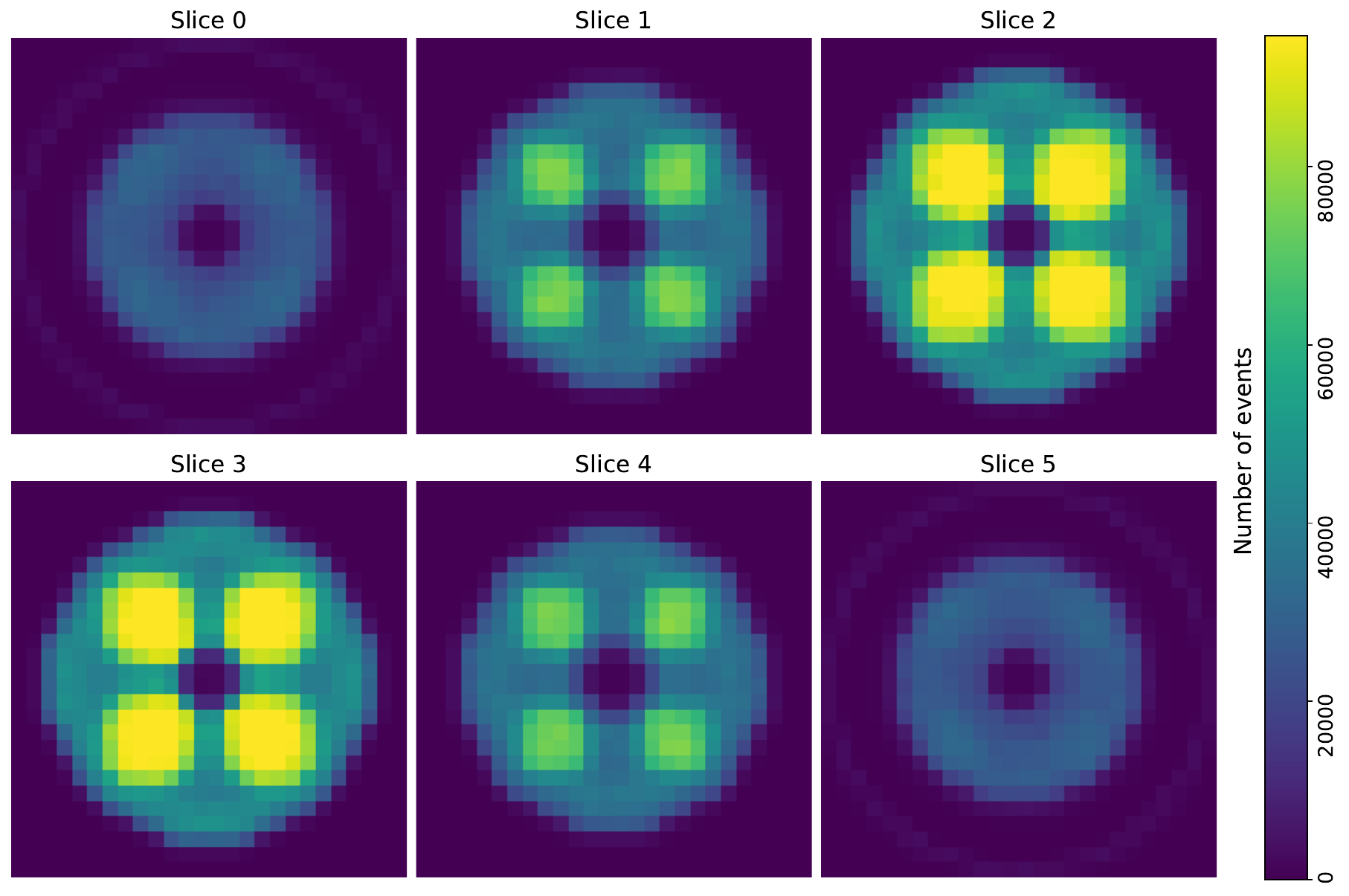}}
\caption{\label{fig:f_map_real_est} Activity estimate after applying the isotropic-emitter geometric correction. The slices show the inferred total-decay counts and visually match the ground-truth structure.}
\end{figure}

\begin{figure}[htbp]
\centering{\includegraphics[width=4in]{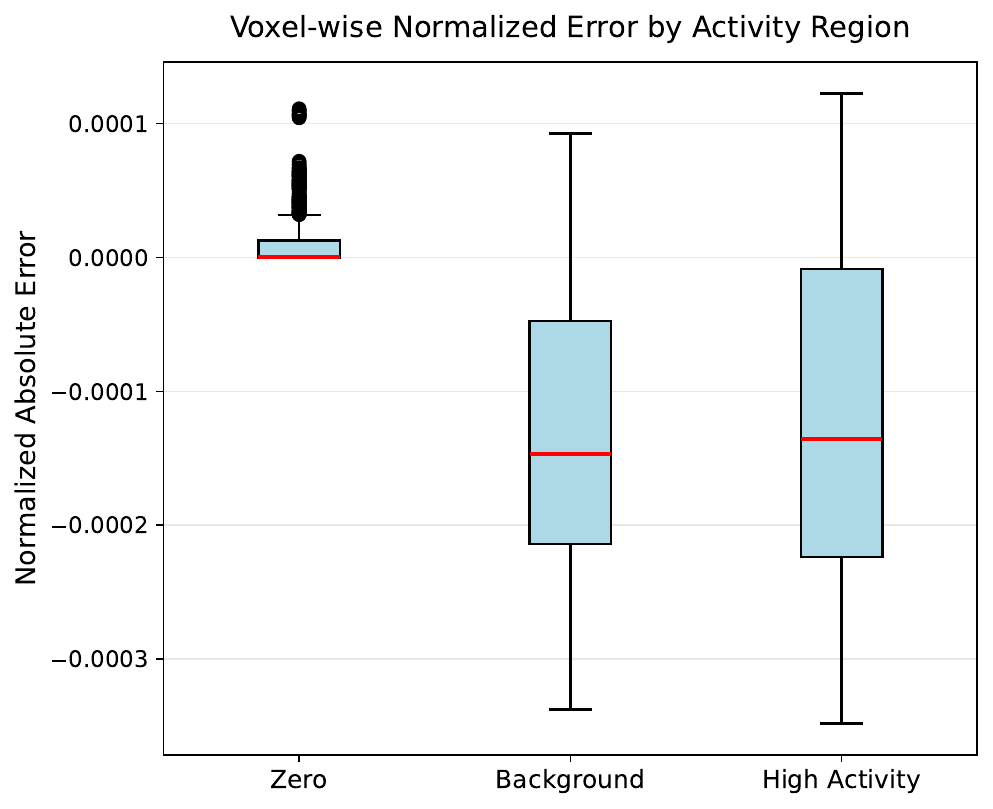}}
\caption{\label{fig:f_real_compare} Distribution of voxel-wise normalized absolute errors between the corrected activity estimate and ground truth, grouped by activity level.}
\end{figure}

\section{L-BFGS-B Estimation Results}

Figure~\ref{fig:lambda_freq} shows the decay-rate image reconstructed by maximum likelihood with L-BFGS-B (10 iterations). The map closely matches the ground-truth lifetime in Figure~\ref{fig:phantom_lambda}, with clear separation among regions of differing $\lambda$.

\begin{figure}[htbp]
\centering{\includegraphics[width=3.5in]{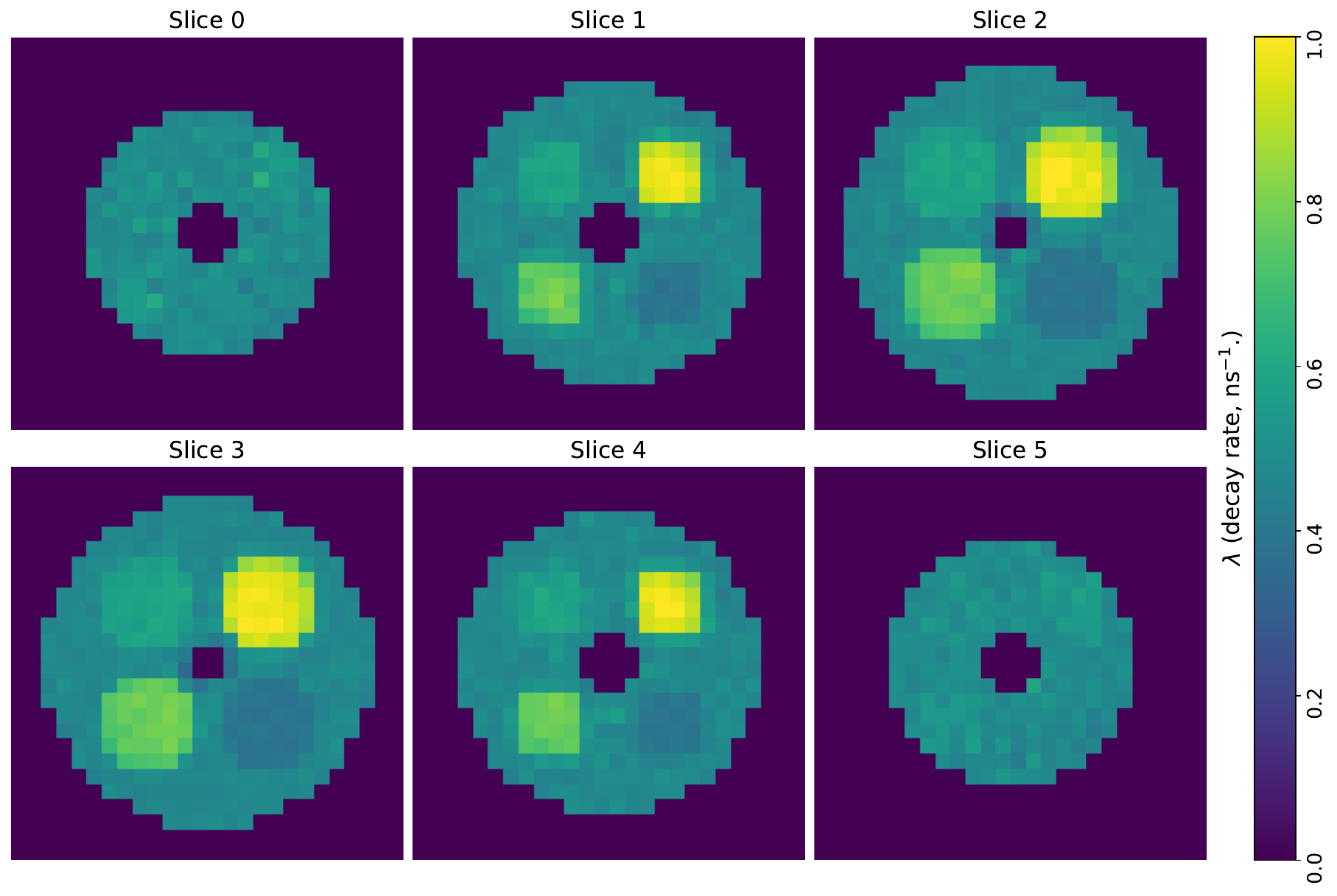}}
\caption{\label{fig:lambda_freq} Voxel-wise decay-rate map reconstructed
with L-BFGS-B (10 iterations). Five lifetime regions are delineated across the volume.}
\end{figure}

\begin{figure}[htbp]
\centering{\includegraphics[width=3.5 in]{Figures/phantom_lambda.pdf}}
\caption{\label{fig:phantom_lambda} Ground-truth lifetime map. The color
scale shows positronium decay rates (ns$^{-1}$). Four inclusions with
increasing $\lambda$ are embedded in the uniform background to evaluate
sensitivity to lifetime variation.}
\end{figure}

\setcounter{figure}{0}
\setcounter{table}{0}
\setcounter{equation}{0}
\end{document}